\documentclass{article}

\usepackage{arxiv}

\linespread{1.06}

\usepackage{amsmath,amsfonts,amssymb, amsthm}

\usepackage[usenames,dvipsnames]{xcolor}
\definecolor{niceRed}{RGB}{190,38,38}
\definecolor{niceBlue}{HTML}{0466a7}

\usepackage[utf8]{inputenc} 
\usepackage[T1]{fontenc}    
\usepackage{hyperref}       
\hypersetup{
	colorlinks = true,
	urlcolor = {black},
	linkcolor = {niceBlue},
	citecolor = {niceRed} 
}

\usepackage{natbib}
\bibpunct[, ]{(}{)}{,}{a}{}{,}%
\def\BIBand{and}%

\usepackage{url}            
\usepackage{booktabs}       
\usepackage{amsfonts}       
\usepackage{nicefrac}       
\usepackage{microtype}      
\usepackage{cleveref}       
\usepackage{lipsum}         
\usepackage{graphicx}
\usepackage{natbib}
\usepackage{doi}

\usepackage{mathtools}
\renewcommand*\vec[1]{\boldsymbol{#1}}
\DeclareMathOperator{\EXP}{\mathbb{E}}

\newcommand*{\diff}{\textup{d}}
\renewcommand*\vec[1]{\boldsymbol{#1}}

\DeclareMathOperator*{\argmin}{arg\,min}

\usepackage{algorithm,algpseudocode}


\newtheorem{lemma}{Lemma}[section]
\newtheorem{problem}{Problem}

\newtheorem{proposition}{Proposition}[section]

\newtheorem{remark}{Remark}[section]

\title{On the Convergence of Learning Algorithms in Bayesian Auction Games}


\newif\ifuniqueAffiliation
\uniqueAffiliationtrue

\ifuniqueAffiliation 
\author{
	Martin Bichler \\
	\small Department of Computer Science\\
	\small Technical University of Munich\\
	\small \texttt{bichler@cit.tum.de} \\
	\And
	Stephan B. Lunowa\\
	\small Department of Mathematics\\
	\small Technical University of Munich \\
	\small  \texttt{stephan.lunowa@tum.de} \\
	\AND
	Matthias Oberlechner\\
	\small Department of Computer Science\\
	\small Technical University of Munich \\
	\small \texttt{matthias.oberlechner@tum.de} \\
	\And
	Fabian R. Pieroth\\
	\small Department of Computer Science\\
	\small Technical University of Munich \\
	\small \texttt{fabian.pieroth@tum.de} \\
	\And 
	Barbara Wohlmuth \\
	\small Department of Mathematics\\
	\small Technical University of Munich \\
	\small \texttt{wohlmuth@tum.de} \\
}
\else
\usepackage{authblk}

\setlength{\affilsep}{0em}
\newbox{\orcid}\sbox{\orcid}{\includegraphics[scale=0.06]{orcid.pdf}} 
\author[1]{%
	\href{https://orcid.org/0000-0000-0000-0000}{\usebox{\orcid}\hspace{1mm}David S.~Hippocampus\thanks{\texttt{hippo@cs.cranberry-lemon.edu}}}%
}
\author[1,2]{%
	\href{https://orcid.org/0000-0000-0000-0000}{\usebox{\orcid}\hspace{1mm}Elias D.~Striatum\thanks{\texttt{stariate@ee.mount-sheikh.edu}}}%
}
\affil[1]{Department of Computer Science, Cranberry-Lemon University, Pittsburgh, PA 15213}
\affil[2]{Department of Electrical Engineering, Mount-Sheikh University, Santa Narimana, Levand}
\fi


\hypersetup{
pdftitle={On the Convergence of Learning Algorithms in Bayesian Auction Games},
pdfsubject={cs.GT},
pdfauthor={Martin~Bichler, 	Stephan B.~Lunowa, Matthias~Oberlechner, Fabian R.~Pieroth, Barbara~Wohlmuth},
pdfkeywords={{Bayes--Nash} equilibrium, variational inequality, gradient-flow algorithms} ,
}

\begin{document}
\maketitle

\begin{abstract}
	Equilibrium problems in Bayesian auction games can be described as systems of differential equations. Depending on the model assumptions, these equations might be such that we do not have a rigorous mathematical solution theory. The lack of analytical or numerical techniques with guaranteed convergence for the equilibrium problem has plagued the field and limited equilibrium analysis to rather simple auction models such as single-object auctions.  
	Recent advances in equilibrium learning led to algorithms that find equilibrium under a wide variety of model assumptions. We analyze first- and second-price auctions where simple learning algorithms converge to an equilibrium. The equilibrium problem in auctions is equivalent to solving an infinite-dimensional variational inequality (VI). 
	Monotonicity and the Minty condition are the central sufficient conditions for learning algorithms to converge to an equilibrium in such VIs. We show that neither monotonicity nor pseudo- or quasi-monotonicity holds for the respective VIs. The second-price auction's equilibrium is a Minty-type solution, but the first-price auction is not.
	However, the Bayes--Nash equilibrium is the unique solution to the VI within the class of uniformly increasing bid functions, which ensures that gradient-based algorithms attain the equilibrium in case of convergence, as also observed in numerical experiments.
\end{abstract}

\keywords{{Bayes--Nash} equilibrium \and variational inequality \and gradient-flow algorithms} 

\section{Introduction}

The Nash equilibrium is the central solution concept in non-cooperative game theory \citep{nash1950equilibrium}.  
Nash's original idea was extended to games with incomplete information \citep{harsanyi1967games}. Today, the analysis of Bayes--Nash equilibrium (BNE) is key to the study of auctions, contests, and many other economic models, where the strategic complexity for participants arises from their uncertainty about the preferences of competitors \citep{vickrey1961counterspeculation}. 

Auction theory studies allocation and prices on markets with self-interested participants in equilibrium. Such predictions are important because most market institutions in this field do not have simple dominant strategies for bidding truthfully. The celebrated Vickrey-Clarke-Groves (VCG) mechanism was shown to be unique in this respect in the widely used standard model with independent private values and quasi-linear (i.e., payoff maximizing) bidders \citep{green1977CharacterizationSatisfactoryMechanisms}.
However, for various reasons, the VCG mechanism is rarely used in practice \citep{ausubel2006LovelyLonelyVickrey}. Therefore, it is essential to understand non-truthful equilibrium strategies in markets more generally. 

Participants in auction games do not have complete but only distributional information about the valuations of competing market participants. The value of competing bidders is modeled as a draw from an atomless distribution function. An equilibrium bid function then determines how much they bid based on their value draw and their knowledge of the prior distribution. The first-price sealed-bid auction in the symmetric and independent prior values model provides a textbook example, where the first-order conditions lead to an ordinary differential equation that admits a closed-form solution for the BNE bid function \citep{krishna2009AuctionTheory}.

Unfortunately, despite the enormous academic attention auction theory has received, BNE strategies are only known for very restricted, simple market models. For bidders with non-uniform or interdependent valuation distributions, multiple objects, or non-quasilinear (i.e., non-payoff-maximizing) utility functions, we typically do not know an explicit equilibrium bid function. 
The equilibrium problem often leads to a system of non-linear differential equations for which we have no exact mathematical solution theory. 
Beyond analytical solutions, numerical techniques for solving differential equations have turned out to be challenging and have received only limited attention. \citet{fibich2011numerical} criticize the inherent numerical instability of standard techniques used in this field. 

\subsection{Equilibrium learning}

Equilibrium learning is an alternative numerical approach to finding equilibrium, as compared to standard techniques for solving differential equations. This literature examines what kind of equilibrium arises as a consequence of {a relatively simple process of learning and adaptation}, in which agents are trying to maximize their payoff while learning about the actions of other agents \citep{fudenberg1998theory, hart2003uncoupled, young2004strategic}. 
Learning provides an intuitive, tractable model of how equilibrium emerges but it is well established that independent learning dynamics do not generally obtain a Nash equilibrium \citep{benaimMixedEquilibriaDynamical1999}.
Numerical analyses of matrix games show that gradient-based algorithms can circle, diverge, or they are even chaotic \citep{sanders2018prevalence, bielawski2021follow, chotibut2020route, palaiopanos2017multiplicative, vlatakisgkaragkounis2023chaos}. 
Recently, \citet{milionis2022nash} proved that there exist games, for which all game dynamics fail to converge to a Nash equilibrium from all starting points. On the other hand, there are also game classes such as potential games or games with strictly dominated strategies, where learning algorithms do converge. 

\subsection{Learning in auctions}

In recent years, a number of learning algorithms were introduced for Bayesian auction games with a continuous type and action space, and they showed convergence on a wide variety of auction models ranging from simple single-object auctions in the independent private values model to interdependent valuations and models with multiple objects \citep{bichler2021npga, bichler2023learning}.  
Equilibrium can be verified ex-post, but the reasons for the convergence of such learning algorithms in auction games have been a mystery so far. If auctions are indeed learnable, this has significant consequences for theory and applications alike. First, numerical solvers could be provided for models that could not be solved analytically so far. Second, if already stylized repeated auctions were not learnable, what would this mean for the growing economy of autobidding agents in display ad auctions and related applications? 

What is so different in auction games compared to general-sum games, such that we see convergence in such a wide variety of models?
Recently, it has been shown that a large class of learning algorithms converges in auction games with complete information, where the valuations of bidders are known a priori \citep{kolumbus2022auctions}. However, the challenge in auctions is that we have incomplete information and Bayesian models are quite different. We focus on the reasons for the convergence of learning algorithms in incomplete-information auction games.

For this, we draw on the field of operator theory and variational inequalities. 
Every Nash Equilibrium (NE) can be seen as a solution to a Stampacchia-type variational inequality (VI), and in some cases, the reverse is also true, for example with quasi-concave utility functions. This connection also holds for auction games and infinite-dimensional VIs  \citep{cavazzutiNashEquilibriaVariational2002}. Interestingly, the link between Nash equilibria and VIs has been explored for traffic games \citep{patriksson2003sensitivity} or Walrasian equilibrium \citep{jofre2007variational}, but not for Bayesian games with continuous type and action space as is the case in auction theory. 
In contrast to finite games, auctions need to be modeled as infinite-dimensional variational inequalities.

In the literature on variational inequalities, two sufficient conditions are known, for which some types of algorithms always converge to an equilibrium. They can be seen as a generalization of convexity in optimization. 
The \textit{monotonicity} condition is the most well-known condition to guarantee convergence for VIs \citep{bauschkeConvexAnalysisMonotone2017}.
Various first-order projection methods, as discussed by \citet{tseng1995linear}, converge to a unique solution of a monotone VI,
and higher-order methods have also been developed \citep{adil2022optimal, lin2022perseus}.
Monotonicity is also central for convergence guarantees in the recent literature on learning in games \citep{ratliff2013characterization, chasnov2019convergence}. Apart from this, the Minty condition has drawn some attention, as extragradient algorithms are known to converge to equilibrium if this condition holds globally \citep{strodiot2016class, Song2020OptDualExtrapolation}. This condition is also referred to as the {Minty VI or dual VI} of the Stampaccia-type VI \citep{ye2022infeasible}. In constant-sum games it is related to the celebrated smoothness condition of games \citep{anagnostides2023interplay} introduced by \citep{roughgarden2015intrinsic}.
It is natural to ask if the monotonicity and the Minty condition, two sufficient conditions, hold in auction games.

\subsection{Contributions}

Demonstrating the monotonicity of auction games is challenging. Whereas practical equilibrium learning algorithms employ some form of discretization, previous research has shown that in discretized versions of auction games, the monotonicity condition is sometimes violated \citep{soda2022}. However, such non-monotonicities could arise due to the game's discretization. Violations in a discretized game might vanish in games with continuous types and actions. 
That is why we study whether monotonicity or the Minty condition is satisfied in infinite dimensions in a function space. If any of the two conditions were satisfied in function space, this would explain the convergence of algorithms also in discretized versions of the game, where the condition is violated \citep{Glowinski1981NumericalAnalysisVI}.

First, we recover the well-known symmetric equilibrium strategies for the first-price and the second-price sealed-bid auction in the symmetric independent private-values model \citep{krishna2009AuctionTheory}, but with a new proof technique based on the Gateaux derivative of the ex-ante utility function, which is novel and useful in its own right. 
Second, this proof technique for equilibrium problems in auctions and the resulting operator for the Gateaux derivative allows us to analyze the monotonicity and the Minty conditions. 

Our findings reveal that the first- and the second-price auctions are neither monotone nor pseudo- or quasi-monotone.
Thus, we look at the Minty condition for variational inequalities.   
While the dominant-strategy incentive-compatible second-price auction satisfies this condition, this is not the case for the first-price auction. 
However, an ex-post guarantee is provided by showing the uniqueness of the VI solution (the BNE) within the class of uniformly increasing bid functions, which ensures that every gradient-based learning algorithm must converge to this solution if it does converge (within this function class).
Our numerical analyses in the space of piecewise-linear functions illustrate the reasons for the Minty violations and show that they are without loss and that the gradient flows lead to the BNE in the first-price auction independent of the starting point and independent of Minty violations. 

\section{Related Literature}

More than 60 years ago, \citet{vickrey1961counterspeculation} derived the Bayes--Nash equilibrium (BNE) strategy in a single-object first-price auction in the independent-private values model with symmetric bidders, uniform prior distributions, and quasilinear utility functions. The first-order conditions, together with the assumption of symmetric bidding behavior, lead to a linear ordinary differential equation, which has a closed-form solution for the BNE bidding strategy. Unfortunately, this is only the case for the symmetric and independent private values model. 
It turns out that deviations from this benchmark model lead to challenges in the equilibrium analysis. For example, even in the asymmetric independent private values model, where different bidders' values are drawn from different distributions, we get a system of non-linear differential equations; and no closed-form expression for the bidding strategies exist for general distributions \citep{hubbard2014ChapterNumericalSolution}. 

The first question is whether BNE always \textit{exists} in such auction games. For finite, complete-information games, we know of the existence of a mixed Nash equilibrium \citep{nash1950equilibrium}. \citet{glicksberg1952further} extended the existence result to games with continuous and compact action sets. For Bayesian games with continuous action space, \citet{jackson2005existence} provide assumptions for the existence of equilibrium in distributional strategies. For example, first-price and second-price single-unit auctions, all-pay auctions, double auctions, and multi-unit discriminatory or uniform price auctions were shown to have an equilibrium in distributional strategies, not necessarily in pure strategies. 
It is interesting to note that there are auction models where there is no BNE of the continuous game, but there are equilibria in the discretized game \citep{jackson2002communication}. 
While there was significant progress, we do not have a complete theory when a BNE exists in general continuous-type and -action auction games \citep{carbonell-nicolau2018ExistenceNashEquilibrium}. 
Note that if {one could} prove the strict monotonicity of the game, this also proves the uniqueness of the equilibrium in the respective auction. Therefore, we can already deduce that strict monotonicity cannot hold for the second-price auction considering asymmetric strategies, where several equilibria are known to exist \citep[p. 118]{krishna2009AuctionTheory}. However, we know that it is unique in the symmetric case.

Another question concerns the \textit{computational complexity} in those cases where Nash equilibria do exist. 
Computing a Nash equilibrium in a finite, complete-information game is generally PPAD-hard \citep{daskalakisComplexityComputingNash2009}. 
\citet{cai2014simultaneous} showed that finding an exact BNE in specific simultaneous auctions for multiple items is at least hard for PP, with PP being a complexity class higher than the polynomial hierarchy and close to PSPACE. Such problems can be considered intractable even for very small problem sizes. We know little about the complexity of finding BNE in other multi-item auctions. 
Recently, \citep{chen2023complexity} proved that there is a PTAS for approximating first-price auctions with a uniform tie-breaking rule, but that the approximation problem is PPAD-complete for other tie-breaking rules. 

A related stream in the literature focuses on \textit{learning in games} \citep{fudenberg1998theory}. For example, fictitious play is a natural method by which agents iteratively search for a pure Nash equilibrium and play the best response to the empirical frequency of play of other players \citep{brownIterativeSolutionGames1951}. Several algorithms have been proposed based on best or better response dynamics for finite and simultaneous-move games, and the literature is large \citep{abreu1988structure, hart2000simple, fudenberg1998theory, hart2003uncoupled, young2004strategic}. 
Learning dynamics do not always converge to equilibrium \citep{daskalakis2010learning, vlatakis2020no}. Learning algorithms can cycle, diverge, or be chaotic; even in zero-sum games, where the Nash equilibrium is tractable \citep{mertikopoulos2018cycles, baileyMultiplicativeWeightsUpdate2018, cheung2020chaos}. \citet{sanders2018prevalence} argues that chaos is typical behavior for more general large matrix games. 
Recent results have shown that learning dynamics do not converge in games with mixed Nash equilibria \citep{baileyMultiplicativeWeightsUpdate2018, letcher2019differentiable}, and that there are finite games for which any dynamics is bound to have starting points that do not end up at a
Nash equilibrium \citep{milionis2022nash}. 
Overall, the dynamics of matrix games can be arbitrarily complex and hard to characterize \citep{andrade2021learning}. 

It is well-known that no-external-regret learners converge to a (Bayesian) coarse correlated equilibrium, but the convergence of learning algorithms to a Nash equilibrium is only known for specific types of games such as finite potential games \citep{foster1997calibrated, hart2000simple,jafari2001no,stoltz2007learning,hartlineNoRegretLearningBayesian2015,foster2016learning}.
In general, the convergence of no-regret learning algorithms to a Nash equilibrium depends on the specific structure of the game and the algorithm used. The convergence of learning algorithms to a Nash equilibrium in Bayesian auction games with continuous type and action space is largely unexplored.


We draw on the literature on \textit{variational inequalities} (VI), in particular infinite-dimensional VIs.  
Variational inequalities were initially introduced to study equilibrium problems in physical systems \citep{zbMATH03235902,Kikuchi1988}. Since then, the theory's applicability has been expanded to include various problems \citep{lionsVariationalInequalities1967, Glowinski1981NumericalAnalysisVI, Kinderlehrer2000VI}.
Notably, the search for equilibrium in a game can be reformulated as solving a related VI. 
Auction models with continuous type and action space can be described as infinite-dimensional variational inequalities \citep{cavazzutiNashEquilibriaVariational2002}. 

Monotonicity can be seen as a generalization of convexity in optimization, and it is the primary property to show the convergence of algorithms in VIs. For first-order methods \citet{nemirovski2004prox} proved that the extragradient method converges to a weak solution with a global rate of $O(1/t)$ if the operator $F$ is monotone and Lipschitz-continuous. Other methods achieve the same goal \citep{nesterov2007dual, kotsalis2022simple} and achieve the lower bound of \citet{ouyang2021lower}. These are also known as ''projection-type'' methods. More recently, also higher-order methods were developed \citep{lin2022perseus}. There is also more recent work on non-monotone VIs, which usually assumes the existence of Minty-type solutions to the VI \citep{Song2020OptDualExtrapolation, huang2023beyond}. \citet{huang2023beyond} show that the existence of Minty solutions turns out to be critical in establishing convergence for the projection-type methods for non-monotone VIs. 

\section{Preliminaries} 

This section lays the foundation for studying the equilibrium problem and its associated variational inequality (VI) in a function space. To begin our analysis, it is crucial to establish a derivative in function space. This requires us to work with a set of strategies that exhibit sufficient well-behaved properties.
Furthermore, we narrow our focus to the symmetric setting with symmetric priors and strategies and the independent private values model. This choice simplifies our analysis and is sufficient to give answers to whether forms of monotonicity are the reason for the convergence of first-order methods in auction models.

\subsection{Abstract setting}

Let $n \in \mathbb{N}$ be the number of bidders. For bidder $i$, the set of possible bids is called $B_i \subset \mathbb{R}$, and the set of valuations of bidder $i$ is $\mathcal{X}_i \subset \mathbb{R}$. We define $\vec{B} := \bigtimes_{i=1}^n B_i$ and $\vec{\mathcal{X}} := \bigtimes_{i=1}^n \mathcal{X}_i$. The goal of each bidder is to maximize their payoff, i.e., they consider their utility function
\[
u_i : \vec{B} \times \vec{\mathcal{X}} \to \mathbb{R} : u_i(\vec{b}, \vec{x}) .
\]
For this, they search for a strategy $\beta_i : \vec{\mathcal{X}} \to B_i$. Let the vector space $V_i$ contain all possible strategies $\beta_i$, while the subset $\mathcal{B}_i \subset V_i$ contains all admissible strategies, and we define $\vec{V} := \bigtimes_{i=1}^n V_i$ and $\vec{\mathcal{B}} := \bigtimes_{i=1}^n \mathcal{B}_i$.
The random values $X_i$ of all bidders $i=1,\dots,n$ are distributed in $\mathcal{X}_i$ according to an atomless probability distribution $F_i$ with full support over $\mathcal{X}_i$. We denote by $U_i : \vec{V} \to \mathbb{R} : U_i(\vec{\beta}) := \EXP_{\vec{X}}[ u_i(\vec{\beta}(\vec{X}), \vec{X}) ]$ the expected utility of bidder $i$ for given strategies $\vec{\beta}$.
Note that here and in the following, we denote by $\vec{\beta}$ a vector $(\beta_1, \beta_2, \dots, \beta_n)$ and by $\beta_i, \vec{\beta}^*_{-i}$ the vector $(\beta_1^*, \dots, \beta_{i-1}^*, \beta_i, \beta_{i+1}^*, \dots, \beta_n^*)$.
The Bayesian Nash equilibrium (BNE) for this auction game is then given by:
\begin{problem}[BNE]
	Find $\vec{\beta}^* = (\beta_1^*, \dots, \beta_n^*) \in \vec{\mathcal{B}}$ such that for all $i=1,\dots,n$ there holds
	\begin{align}\label{ineq:BNE}
		U_i(\vec{\beta}^*) \geq U_i(\beta_i, \vec{\beta}^*_{-i}) \qquad \forall \beta_i \in \mathcal{B}_i .
	\end{align}
\end{problem}

Under certain conditions (to be elaborated in the following), the equilibrium condition can be reformulated as a variational inequality. For this, we need to consider the expected utility function's derivative in a function space. This demands some regularity of the underlying function space. Assume $V_i$ is a Banach space and denote by $V_i^* := \mathcal{L}(V_i, \mathbb{R})$ its dual space consisting of all continuous linear functionals on $V_i$. Note that not all linear functionals are continuous for general infinite-dimensional spaces. Furthermore, assume that $\mathcal{B}_i \subset V_i$ is convex and closed. Following the standard procedure in the literature \citep{lionsVariationalInequalities1967, Kinderlehrer2000VI}, we derive the so-called Gateaux-derivative of $U_i$, which can be understood as the generalization of the (linear) directional derivative in normed spaces.
So, let $DU_i(\vec{\beta})[d]$ denote the directional derivative of $U_i$ at $\vec{\beta} = (\beta_1, \dots, \beta_n) \in \vec{\mathcal{B}}$ with respect to $\beta_i$ along $d \in V_i$, i.e.,
\begin{align}\label{eq:Gateaux}
	DU_i(\vec{\beta})[d] := \lim_{\varepsilon\to 0} \varepsilon^{-1} \big(U_i(\beta_i + \varepsilon d, \vec{\beta}_{-i}) - U_i(\vec{\beta}) \big)  \qquad \forall d \in V_i .
\end{align}
The directional derivative is the Gateaux-derivative iff $DU_i(\vec{\beta}) \in V_i^*$, i. e., when the derivative is a continuous linear operator in the direction $d \in V_i$.
If the Gateaux-derivative exists, a \emph{necessary condition} for a BNE is the following (Stampacchia-type) VI \citep{Kinderlehrer2000VI}:
\begin{problem}[VI]
	Find $\vec{\beta}^* \in \vec{\mathcal{B}}$ such that for all $i=1,\dots,n$ there holds
	\begin{align}\label{ineq:VI}
		DU_i(\vec{\beta}^*)[\beta_i - \beta^*_i] \leq 0 \qquad \forall \beta_i \in \mathcal{B}_i .
	\end{align}
\end{problem}
A \emph{sufficient condition}, also referred to as the dual VI of the Stampacchia formulation, is given by the (Minty-type) VI:
\begin{problem}[MVI]
	Find $\vec{\beta^*} \in \vec{\mathcal{B}}$ which satisfies for all $i=1,\dots,n$
	\begin{align}\label{ineq:minty}
		D_i(\vec{\beta})[\beta_i - \beta^*_i] \leq 0 \qquad \forall \vec{\beta} \in \vec{\mathcal{B}}.
	\end{align}
\end{problem}
In general, solutions of the Minty-type VI \eqref{ineq:minty} are a subset of the BNEs given by \eqref{ineq:BNE}, which are, in turn, a subset of solutions of the Stampacchia-type VI \eqref{ineq:VI}.
Vice versa, solutions of the Stampacchia-type VI \eqref{ineq:VI} are also BNEs if $U_i$ is pseudoconvex in $\beta_i$ for all $\vec{\beta}_{-i}$, and BNEs are, in turn, solutions of the Minty-type VI \eqref{ineq:minty} if $U_i$ is quasiconvex in $\beta_i$ for all $\vec{\beta}_{-i}$ \citep{cavazzutiNashEquilibriaVariational2002}.

\subsection{Symmetric and independent private value auctions}
In the following sections, we consider second- and first-price sealed-bid auctions under the assumption of (complete) symmetry and identically independently distributed private values. (Complete) symmetry implies
\[
B_i = B,\quad \mathcal{X}_i = \mathcal{X}, \quad X_i \sim_{iid.} F_i \equiv F, \quad u_i \equiv u \qquad \forall i = 1,\dots,n.
\]
Furthermore, private values ensure $\beta_i(\vec{x}) = \beta_i(x_i)$ for $i = 1,\dots,n$, i.e., the strategy of each bidder $i$ depends only on the knowledge of their own valuation $X_i = x_i$. The ex-ante utility is denoted $U(\vec{\beta}) := \EXP_{\vec{X}}[u(\vec{\beta}(\vec{X}), \vec{X})]$, and symmetric strategies are denoted $\vec{\tilde{\beta}} := (\tilde{\beta}, \dots, \tilde{\beta}) \in \vec{\mathcal{B}}$ for $\tilde{\beta} \in \mathcal{B}$.

In the following, we assume $\mathcal{X} = [0, 1]$ (without loss of generality) and $F \in C^{0,1}([0,1])$, i.e., the cumulative probability function is Lipschitz-continuous. 
To analyze the VI we have to define an appropriate set of admissible strategies. This set should be sufficiently general to allow for strategies that may be considered as sensible for the underlying problem. Additionally, it needs to provide adequate structure, e.g., an inner product or a natural dual product. Therefore, consider the Banach space
\[
V_i = V := W^{1,1}(0,1; F) = \{ \beta \in L^1(0,1; F) \ | \ \beta' \in L^1(0,1; F) \} ,
\]
i.e., $V$ consists of $F$-integrable functions with $F$-integrable weak derivatives. Note that $V \subseteq AC([0,1])$, where the latter space denotes all absolutely continuous functions on $[0, 1]$. For small $\delta > 0$ we define
\[
\mathcal{B}_\delta := \big\{ \beta \in V\ :\ 0 \leq \beta \leq 1 \text{ $F$-a.e., } 0 < \delta \leq \beta' \text{ $F$-a.e., and } {\beta(0) = 0} \big\} .
\]
Note that the restriction $0 \leq \beta \leq 1$ is natural because only positive bids are feasible, and bidding more than the maximal valuation $1$ implies a non-positive payoff. Similarly, it is natural to assume the bids $\beta$ to be increasing in valuation. Requiring a small positive derivative ($\beta' \geq \delta > 0$) is slightly more restrictive, but together with the other assumptions, this ensures the set $\mathcal{B}_\delta$ to be convex, closed, and bounded in $V$.

In this setting, the BNE \eqref{ineq:BNE}, VI \eqref{ineq:VI} and MVI \eqref{ineq:minty} simplify to {deviations in }a single strategy:
\begin{problem}[Symmetric BNE, VI and MVI]
	A symmetric BNE $\beta^* \in \mathcal{B}_\delta$ satisfies (with $\vec{\beta}^* = (\beta^*, \dots, \beta^*)$)
	\begin{align}\label{ineq:symBNE}
		U(\vec{\beta}^*) \geq U(\beta, \vec{\beta}_{-1}^*), \qquad \forall \beta \in \mathcal{B}_\delta .
	\end{align}
	A solution $\beta^* \in \mathcal{B}_\delta$ to the symmetric VI satisfies (with $\vec{\beta}^* = (\beta^*, \dots, \beta^*)$)
	\begin{align}\label{ineq:symVI}
		DU(\vec{\beta}^*)[\beta - \beta^*] \leq 0 \qquad \forall \beta \in \mathcal{B}_\delta.
	\end{align}
	A solution $\beta^* \in \mathcal{B}_\delta$ to the symmetric MVI satisfies (with $\vec{\tilde{\beta}} = (\tilde{\beta}, \dots, \tilde{\beta})$)
	\begin{align}\label{ineq:symMVI}
		DU(\beta, \vec{\tilde{\beta}}_{-1})[\beta - \beta^*] \leq 0 \qquad \forall \beta, \tilde{\beta} \in \mathcal{B}_\delta.
	\end{align}
\end{problem}
Here $DU(\vec{\beta})[d]$ is the Gateaux-derivative of $U$ at $\vec{\beta} \in \vec{\mathcal{B}}_\delta$ with respect to $\beta_1$ along $d \in V$.

In the following two sections, we use these mathematical tools to analyze the second- and first-price sealed-bid auction in the continuous setting. Thereto, we first derive the Gateaux-derivative of the bidder's utility function, and use it to analyze the BNE, VI and MVI and their potentially different solutions in detail for these applications. Subsequently, we analyze whether the Gateaux-derivative is (quasi-)monotone, since this property would ensure convergence of (certain) gradient-based learning algorithms. In particular, we discuss counterexamples for the simplest case of two bidders with independent uniform priors.

\section{Second-price sealed-bid auction}

For symmetric second-price sealed-bid auctions with risk-neutral bidders, the utility function of a bidder is given by
\begin{align*}
	u(\vec{b}, \vec{x}) 
	&= \chi_{\{b_1 > \max\limits_{j=2,\dots,n} b_j\}} \Big(x_1 - \max_{j=2,\dots,n} b_j\Big) \\
\end{align*}
Since every $\tilde{\beta} \in \mathcal{B}_\delta$ is an increasing function, it satisfies $\max\limits_{j=2,\dots,n} \tilde{\beta}(x_j) = \tilde{\beta}\big(\max\limits_{j=2,\dots,n} x_j\big)$. Using $Y := \max\limits_{j=2,\dots,n} X_j \sim G := F^{n-1} \in C^{0,1}([0,1])$ with derivative $g := G'$, the ex-ante utility $U$ against symmetric strategies $\vec{\tilde{\beta}} = (\tilde{\beta}, \dots, \tilde{\beta})$ can then be reformulated as
\begin{align*}
U(\beta, \vec{\tilde{\beta}}_{-1})
&= \int_{[0,1]^n} \chi_{\{\beta(x_1) > \max\limits_{j=2,\dots,n} \tilde{\beta}(x_j) \}} \big(x_1 - \max_{j=2,\dots,n} \tilde{\beta}(x_j)\big) \diff F(x_1)\cdots \diff F(x_n) \\
&= \int_{[0,1]^n} \chi_{\{\beta(x_1) > \tilde{\beta}(\max\limits_{j=2,\dots,n} x_j) \}} \big(x_1 - \tilde{\beta}(\max_{j=2,\dots,n} x_j)\big) \diff F(x_1) \cdots \diff F(x_n) \\
&= \int_0^1 \int_0^1 \chi_{\{\beta(x) > \tilde{\beta}(y) \}} \big(x - \tilde{\beta}(y)\big) \diff G(y) \diff F(x) .
\end{align*}
This leads to the following expression for the derivative:
\begin{lemma}
The Gateaux-derivative at $(\beta,\vec{\tilde{\beta}}_{-1}) \in \vec{\mathcal{B}}_\delta$ along $d \in V$ is given by
\begin{align}\label{eq:derivative:SP}
	DU(\beta, \vec{\tilde{\beta}}_{-1})[d]
	&= \int_0^1 d(x) \chi_{\{ \beta(x) \in \text{Im}(\tilde{\beta})\}} \big(x - \beta(x)\big) \frac{g(\tilde{\beta}^{-1}(\beta(x)))}{\tilde{\beta}'(\tilde{\beta}^{-1}(\beta(x)))} \diff F(x).
\end{align}
\end{lemma}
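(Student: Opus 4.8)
The plan is to reduce the computation of $DU(\beta,\vec{\tilde{\beta}}_{-1})[d]$ to a one-dimensional differentiation problem under the outer integral over $x$, and then to justify the interchange of limit and integral via dominated convergence. For fixed $x\in[0,1]$ set
\[
\phi_x(s) := \int_0^1 \chi_{\{s > \tilde{\beta}(y)\}}\,(x - \tilde{\beta}(y))\,\diff G(y), \qquad s\in\mathbb{R},
\]
so that the last displayed reformulation of $U$ gives $U(\beta+\varepsilon d,\vec{\tilde{\beta}}_{-1}) = \int_0^1 \phi_x(\beta(x)+\varepsilon d(x))\,\diff F(x)$ for every $\varepsilon$, and hence the difference quotient defining the Gateaux-derivative in \eqref{eq:Gateaux} equals $\int_0^1 \varepsilon^{-1}\!\left(\phi_x(\beta(x)+\varepsilon d(x)) - \phi_x(\beta(x))\right)\diff F(x)$. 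The proof then has three ingredients: (i) compute $\phi_x'$; (ii) bound the integrand uniformly in $\varepsilon$ by an $F$-integrable function of $x$; (iii) identify its $F$-a.e. pointwise limit.

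For (i), since $\tilde{\beta}\in\mathcal{B}_\delta$ is absolutely continuous and strictly increasing (as $\tilde{\beta}'\ge\delta>0$ a.e.) with $\tilde{\beta}(0)=0$, it is a homeomorphism of $[0,1]$ onto $\text{Im}(\tilde{\beta})=[0,\tilde{\beta}(1)]$ whose inverse $\tilde{\beta}^{-1}$ is Lipschitz with constant $1/\delta$. For $s\notin\text{Im}(\tilde{\beta})$ the integrand defining $\phi_x$ is unchanged under small perturbations of $s$, so $\phi_x$ is locally constant; for $s\in[0,\tilde{\beta}(1)]$ the event $\{s>\tilde{\beta}(y)\}$ is $\{y<\tilde{\beta}^{-1}(s)\}$ and the substitution $t=\tilde{\beta}(y)$ yields
\[
\phi_x(s) = \int_0^{\tilde{\beta}^{-1}(s)} (x-\tilde{\beta}(y))\,g(y)\,\diff y = \int_0^s (x-t)\,\frac{g(\tilde{\beta}^{-1}(t))}{\tilde{\beta}'(\tilde{\beta}^{-1}(t))}\,\diff t .
\]
Because $F\in C^{0,1}$, the density $g=G'=(n-1)F^{n-2}F'$ is in $L^\infty(0,1)$, so together with $\tilde{\beta}'\ge\delta$ the last integrand is bounded by $\|g\|_{L^\infty}/\delta$; by the Lebesgue fundamental theorem of calculus $\phi_x$ is therefore Lipschitz with constant $\|g\|_{L^\infty}/\delta$ (uniformly in $x$), and for a.e.\ $s$ one has $\phi_x'(s) = \chi_{\{s\in\text{Im}(\tilde{\beta})\}}(x-s)\,g(\tilde{\beta}^{-1}(s))/\tilde{\beta}'(\tilde{\beta}^{-1}(s)) = x\rho(s) - s\rho(s)$, where $\rho(s):=\chi_{\{s\in\text{Im}(\tilde{\beta})\}}\,g(\tilde{\beta}^{-1}(s))/\tilde{\beta}'(\tilde{\beta}^{-1}(s))$ and $s\mapsto s\rho(s)$ are bounded and independent of $x$.

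For (ii), the uniform Lipschitz bound gives $\big|\varepsilon^{-1}(\phi_x(\beta(x)+\varepsilon d(x))-\phi_x(\beta(x)))\big|\le (\|g\|_{L^\infty}/\delta)\,|d(x)|$ for all $\varepsilon$, and $|d|\in L^1(0,1;F)$ since $d\in V$ and $F$ is finite. For (iii), writing the difference quotient as $d(x)\cdot(\varepsilon d(x))^{-1}\int_{\beta(x)}^{\beta(x)+\varepsilon d(x)}\phi_x'(t)\,\diff t$, the Lebesgue differentiation theorem shows it converges to $d(x)\phi_x'(\beta(x))$ whenever $\beta(x)$ is a Lebesgue point of $\rho$ and of $s\mapsto s\rho(s)$. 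Applying dominated convergence then gives $DU(\beta,\vec{\tilde{\beta}}_{-1})[d]=\int_0^1 d(x)\,\phi_x'(\beta(x))\,\diff F(x)$, which is \eqref{eq:derivative:SP}; finally $d\mapsto DU(\beta,\vec{\tilde{\beta}}_{-1})[d]$ is linear and bounded by $(\|g\|_{L^\infty}/\delta)\|d\|_V$, so it lies in $V^*$ and the directional derivative is the Gateaux-derivative.

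The main obstacle is step (iii): $\phi_x'$ is only determined up to a Lebesgue-null set, but the integral in \eqref{eq:derivative:SP} is against the pushforward of $F$ under $\beta$, so one must ensure this measure does not charge the bad set. This is exactly where the structural assumptions are used: $\beta'\ge\delta$ makes $\beta^{-1}$ Lipschitz, so it maps the Lebesgue-null exceptional set to a Lebesgue-null set, and the Lipschitz continuity of $F$ makes $F$ absolutely continuous, so that set also has $F$-measure zero; the same two assumptions (via boundedness of $g$ and $\tilde{\beta}'\ge\delta$) provide the uniform Lipschitz bound on $\phi_x$ needed for dominated convergence.
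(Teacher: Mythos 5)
Your proof is correct and follows essentially the same route as the paper's: both reduce the difference quotient to a one-dimensional integral against $\diff G$, exploit that $\tilde{\beta}'\geq\delta$ and $g\in L^\infty$ make $G\circ\tilde{\beta}^{-1}$ Lipschitz with constant $\delta^{-1}\|g\|_{L^\infty}$ to get a uniform-in-$\varepsilon$ bound, identify the $F$-a.e.\ pointwise limit, and conclude by dominated convergence. Your treatment is in fact somewhat more careful than the paper's at the pointwise-convergence step, where you explicitly verify that the Lebesgue-null set of bad points for $\phi_x'$ is not charged by the pushforward of $F$ under $\beta$ (using $\beta'\geq\delta$ and the absolute continuity of $F$), a point the paper glosses over.
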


\begin{proof}
By definition, we have
\begin{align*}
DU(\beta, \vec{\tilde{\beta}}_{-1})[d] 
&= \lim_{\varepsilon \to 0} \varepsilon^{-1} \left( U(\beta+\varepsilon d, \vec{\tilde{\beta}}_{-1}) - U(\beta, \vec{\tilde{\beta}}_{-1}) \right) \\
&= \lim_{\varepsilon \to 0} \int_0^1 \underbrace{\int_0^1 \varepsilon^{-1} \big(\chi_{\{\beta(x) + \varepsilon d(x) > \tilde{\beta}(y) \}} - \chi_{\{\beta(x) > \tilde{\beta}(y) \}}\big) \big(x - \tilde{\beta}(y)\big) \diff G(y)}_{I(x)\, :=} \diff F(x). 
\end{align*}
The direction $d \in V \subset AC([0,1])$ is uniformly bounded by $\|d\|_{L^\infty(0,1)} < \infty$, so that the integrand $I(x)$ is bounded $F$-a.e. as seen by the following calculation:
\begin{align*}
|I(x)|
&\leq |\varepsilon^{-1}| \int_0^1 \chi_{\{\beta(x) + |\varepsilon| \|d\|_{L^\infty} \geq \tilde{\beta}(y) \geq \beta(x) - |\varepsilon| \|d\|_{L^\infty} \}} \diff G(y) \\
&= |\varepsilon^{-1}| \int_{\tilde{\beta}(0)}^{\tilde{\beta}(1)} \chi_{\{\beta(x) + |\varepsilon| \|d\|_{L^\infty} \geq b \geq \beta(x) - |\varepsilon| \|d\|_{L^\infty} \}} \diff(G\circ\tilde{\beta}^{-1})(b) \\
&\leq |\varepsilon^{-1}| L_{G\circ\tilde{\beta}^{-1}} \big|(\beta(x) + |\varepsilon| \|d\|_{L^\infty}) - (\beta(x) - |\varepsilon| \|d\|_{L^\infty})\big| \\
&= 2  \|d\|_{L^\infty} L_{G\circ\tilde{\beta}^{-1}} \leq 2 \delta^{-1} \|d\|_{L^\infty} \|g\|_{L^\infty(0,1)}.
\end{align*}
The third step follows since $G\circ\tilde{\beta}^{-1}$ is Lipschitz-continuous due to $0 \leq (\tilde{\beta}^{-1})' = (\tilde{\beta}'(\tilde{\beta}^{-1}(z)))^{-1} \leq \delta$ ($F$-a.e.).
Using the Lipschitz-continuity, we also obtain the $F$-a.e. point-wise convergence:
\begin{align*}
I(x)
&= \int_0^1 \varepsilon^{-1} \big(\chi_{\{\beta(x) + \varepsilon d(x) > \tilde{\beta}(y) \}} - \chi_{\{\beta(x) > \tilde{\beta}(y) \}}\big) \big(x - \tilde{\beta}(y)\big) \diff G(y) \\
&= \chi_{\{ \beta(x) { < \tilde{\beta}(1)}\}} \varepsilon^{-1} \int_{\beta(x)}^{\beta(x) + \varepsilon d(x)} \big(x - b\big) \diff(G \circ \tilde{\beta}^{-1})(b)  \qquad\text{$F$-a.e.}\\
&\to \chi_{\{ \beta(x) { < \tilde{\beta}(1)}\}} d(x) \big(x - \beta(x)\big) \frac{g(\tilde{\beta}^{-1}(\beta(x)))}{\tilde{\beta}'(\tilde{\beta}^{-1}(\beta(x)))} \qquad\text{$F$-a.e.\ .}
\end{align*}
By the dominated convergence theorem, we can interchange integration and limit to rewrite the original integral as
\begin{align*}
DU(\beta, \vec{\tilde{\beta}}_{-1})[d]
&= \int_0^1 \lim_{\varepsilon \to 0} I(x) \diff F(x)\\
&= \int_0^1 \chi_{\{ \beta(x) { < \tilde{\beta}(1)}\}} d(x) \big(x - \beta(x)\big) \frac{g(\tilde{\beta}^{-1}(\beta(x)))}{\tilde{\beta}'(\tilde{\beta}^{-1}(\beta(x)))} \diff F(x).
\end{align*}

This expression is obviously linear in $d$ and bounded by $|DU(\beta, \vec{\tilde{\beta}}_{-1})[d]| \leq 2 \delta^{-1}\|g\|_{L^\infty(0,1)}\|d\|_{L^\infty(\mathcal{X},F)}$, i.e., $DU(\beta, \vec{\tilde{\beta}}_{-1}) \in V^*$ for all $\beta, \tilde{\beta} \in \mathcal{B}_\delta$.
\end{proof}

\subsection{Existence and uniqueness}

For symmetric second-price sealed-bid auctions with independent private values, we can show that a unique BNE exists and coincides with the (unique) solution of the VI and of the MVI. Therefore, these notions are equivalent in this particular case, even though we show in the following section that the Gateaux-derivative is not monotone, nor pseudo- nor quasi-monotone.

\begin{lemma}\label{la:SP:BNE}
The symmetric BNE, VI and MVI problems have the ($F$-a.e.) unique solution $\beta^* = \textup{Id}$ in the compact and convex set $\mathcal{B}_\delta \subset V$ for $0 < \delta \leq 1$.
\end{lemma}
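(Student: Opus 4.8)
The plan is to prove the three-part statement by showing (i) $\beta^* = \mathrm{Id}$ solves the MVI, (ii) every VI solution equals $\mathrm{Id}$ $F$-a.e., and (iii) chaining the inclusions MVI $\subseteq$ BNE $\subseteq$ VI (recalled in the Preliminaries, valid here because the relevant convexity properties will be checked) closes the loop so that all three solution sets coincide and equal $\{\mathrm{Id}\}$. The main computational input is the derivative formula \eqref{eq:derivative:SP}. First I would observe that $\mathrm{Id} \in \mathcal{B}_\delta$ for every $0 < \delta \le 1$ since $\mathrm{Id}' \equiv 1 \ge \delta$, $\mathrm{Id}(0)=0$, and $0 \le \mathrm{Id} \le 1$ on $[0,1]$.

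For the MVI, fix arbitrary $\beta,\tilde\beta \in \mathcal{B}_\delta$ and evaluate \eqref{eq:derivative:SP} with direction $d = \beta - \beta^* = \beta - \mathrm{Id}$, so $d(x) = \beta(x) - x$. The integrand contains the factor $d(x)\,(x - \beta(x)) = -(\beta(x)-x)^2 \le 0$, while the remaining weight $\chi_{\{\beta(x)\in\mathrm{Im}(\tilde\beta)\}} \, g(\tilde\beta^{-1}(\beta(x)))/\tilde\beta'(\tilde\beta^{-1}(\beta(x)))$ is nonnegative ($g \ge 0$ as a density of $G = F^{n-1}$ with $F$ nondecreasing, and $\tilde\beta' \ge \delta > 0$). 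Hence $DU(\beta,\vec{\tilde\beta}_{-1})[\beta - \beta^*] \le 0$ pointwise under the integral, giving \eqref{ineq:symMVI}. So $\mathrm{Id}$ solves the MVI, and therefore (by the inclusion MVI-solutions $\subseteq$ BNE, and BNE $\subseteq$ VI-solutions, which hold provided $U$ is quasiconvex resp.\ the VI is necessary — both already established in the Preliminaries) it is also a BNE and a VI solution.

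For uniqueness it suffices to show any VI solution $\beta^*$ is $\mathrm{Id}$ $F$-a.e., since the solution sets are nested. Let $\beta^*$ solve \eqref{ineq:symVI}; here the relevant derivative is $DU(\vec\beta^*)[\beta - \beta^*]$, i.e.\ \eqref{eq:derivative:SP} with $\beta = \tilde\beta = \beta^*$, which collapses to $\int_0^1 (\beta(x) - \beta^*(x))\,(x - \beta^*(x))\, g(x)\, \diff F(x) \le 0$ for all $\beta \in \mathcal{B}_\delta$ (using $\tilde\beta^{-1}(\beta^*(x)) = x$ and $\chi_{\{\beta^*(x)\in\mathrm{Im}(\beta^*)\}} = 1$). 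The strategy is to pick clever test functions $\beta$ to force $\beta^* = \mathrm{Id}$ wherever $g > 0$: on the region $\{x : \beta^*(x) < x\}$ one perturbs $\beta^*$ upward (e.g.\ $\beta = \beta^* + \eta\,(\mathrm{Id} - \beta^*)$, which stays in the convex set $\mathcal{B}_\delta$ for small $\eta > 0$), making the integrand $\eta\,(x-\beta^*(x))^2 g(x) > 0$ on a set of positive measure — contradiction; symmetrically on $\{\beta^*(x) > x\}$. One subtlety: the admissible perturbations must respect the constraints $0 \le \beta \le 1$ and $\beta' \ge \delta$, so I would use convex combinations $\beta = (1-\eta)\beta^* + \eta\,\mathrm{Id}$ with $\mathrm{Id} \in \mathcal{B}_\delta$, which by convexity of $\mathcal{B}_\delta$ lies in $\mathcal{B}_\delta$; then $\beta - \beta^* = \eta(\mathrm{Id}-\beta^*)$ and the VI inequality reads $\eta \int_0^1 -(x - \beta^*(x))^2 g(x)\,\diff F(x) \le 0$, which is automatically true — so this single test direction is \emph{not} enough and I instead need directions that isolate the sign of $x - \beta^*(x)$ locally.

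The main obstacle is therefore the uniqueness argument: the naive global test direction yields no information, so one must localize. I would handle this by testing against $\beta_A = \beta^* + \eta\,\mathbf{1}_A\cdot(\mathrm{Id}-\beta^*)$ smoothed appropriately — or, cleaner, exploit that $\mathrm{Id}$ is \emph{also} a solution and run the VI inequality \emph{both} at $\beta^*$ with test $\mathrm{Id}$ and at $\mathrm{Id}$ with test $\beta^*$; adding them and using that the operator's "diagonal" values are explicit should produce $\int_0^1 (\beta^*(x)-x)\big(\text{(weight at }\beta^*) - (\text{weight at }\mathrm{Id})\big)\,\diff F(x) \le 0$ in a form that, after identifying both weights as $g$ evaluated at the same point, forces $(\beta^*(x)-x)^2 g(x) = 0$ $F$-a.e. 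Since $g = G' = (n-1)F^{n-2}F' > 0$ wherever $F' > 0$, and $F$-a.e.\ means a.e.\ with respect to $F' \, dx$, this yields $\beta^* = \mathrm{Id}$ $F$-a.e. Finally, I would note $\delta \le 1$ is needed precisely so that $\mathrm{Id} \in \mathcal{B}_\delta$, and that compactness/convexity of $\mathcal{B}_\delta$ was already recorded, so no separate existence argument beyond exhibiting $\mathrm{Id}$ is required.
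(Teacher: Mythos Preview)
Your MVI verification and the sandwich argument (MVI $\subseteq$ BNE $\subseteq$ VI, hence uniqueness at the VI level settles everything) match the paper's approach. The problem is a sign error in your uniqueness step that sends you off on an unnecessary detour.

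With $\beta = (1-\eta)\beta^* + \eta\,\mathrm{Id}$ you have $\beta(x) - \beta^*(x) = \eta\,(x - \beta^*(x))$, so the integrand in your displayed VI expression is
\[
\big(\beta(x)-\beta^*(x)\big)\big(x-\beta^*(x)\big)\,\frac{g(x)}{{\beta^*}'(x)}
= \eta\,\big(x-\beta^*(x)\big)^2\,\frac{g(x)}{{\beta^*}'(x)} \ \ge\ 0,
\]
not $-\eta(x-\beta^*(x))^2 g(x)$. The VI inequality then says this nonnegative integral is $\le 0$, forcing it to vanish and giving $\beta^* = \mathrm{Id}$ $F$-a.e.\ immediately (since $g>0$ and ${\beta^*}'\ge\delta>0$ $F$-a.e.). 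This is exactly what the paper does, even more directly: since $\mathrm{Id}\in\mathcal{B}_\delta$ one can simply take $\beta = \mathrm{Id}$ (no $\eta$ needed) and read off $\int_0^1 (x-\beta^*(x))^2\,g(x)/{\beta^*}'(x)\,\diff F(x)\le 0$. Your localization scheme with indicator-modified perturbations and the ``run the VI at both points and add'' idea are not needed; the single global test $\beta=\mathrm{Id}$ already yields the full conclusion. (Minor: you also dropped the factor $1/{\beta^*}'(x)$ from \eqref{eq:derivative:SP}, though this does not affect the argument.)
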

\begin{proof}Using the expression \eqref{eq:derivative:SP} for $DU$, the symmetric VI \eqref{ineq:symVI} reads
\begin{align}\label{ineq:symVI:SP}
0 &\geq DU(\vec{\beta^*})[\beta - \beta^*]
= \int_0^1 \big(\beta(x) - \beta^*(x)\big) \big(x - \beta^*(x)\big) \frac{g(x)}{{\beta^*}'(x)} \diff F(x) 
\end{align}
for all $\beta \in \mathcal{B}_\delta$. Obviously, $\beta^* = \textup{Id}$ satisfies the VI, and $\textup{Id} \in \mathcal{B}_\delta$ for $\delta \leq 1$. We further show that this is the only solution of the VI. Let $\beta^* \in \mathcal{B}_\delta$ be any solution of the VI and consider $\beta = \textup{Id}$. Then, \eqref{ineq:symVI:SP} becomes
\begin{align*}
\int_{\mathcal{X}}  \big|x - \beta^*(x)\big|^2 \frac{g(x)}{{\beta^*}'(x)} \diff F(x) \leq 0 .
\end{align*}
Since $N = \{ x \in \mathcal{X} \ | \ g(x) = \frac{\diff}{\diff x}((F(x))^{n-1}) = 0 \}$ is a set of measure zero with respect to $F$, and ${\beta^*}' \geq \delta > 0$, this yields $\beta^*(x) = x$ for $F$-a.e. $x \in \mathcal{X}$.\\
On the other hand, the MVI \eqref{ineq:symMVI} for arbitrary $\beta,\tilde{\beta} \in \mathcal{B}_\delta$ reads
\begin{align*}
0 &\geq DU(\beta, \vec{\tilde{\beta}}_{-1})[\beta - \beta^*] 
= \int_0^1 \chi_{\{ \beta(x) {< \tilde{\beta}(1)}\}} \big(\beta(x) - \beta^*(x)\big) \big(x - \beta(x)\big) \frac{g(\tilde{\beta}^{-1}(\beta(x)))}{\tilde{\beta}'(\tilde{\beta}^{-1}(\beta(x)))} \diff F(x).
\end{align*}
Obviously, $\beta^* = \textup{Id}$ satisfies the MVI. 
Therefore,  $\beta^* = \textup{Id}$ is the only strategy satisfying the necessary and sufficient condition for a BNE.
\end{proof}

\subsection{Monotonicity}

Gradient-based learning for symmetric strategies uses the gradient operator $DU(\beta, \vec{\tilde{\beta}}_{-1})$ with $\beta = \tilde{\beta}$. Therefore, we are particularly interested whether the operator $DU(\vec{\beta})$ is (quasi-)monotone in $\beta \in \mathcal{B}_\delta$.
This condition would ensure convergence for extra-gradient methods \citep{khanh2016ModifiedExtragradientMethod}.
However, we show that even in the most simple setting with two bidders ($n=2$) and uniform priors ($F = \textup{Id}$), the operator $DU$ turns out to be neither monotone nor pseudo- nor quasi-monotone.

The operator $DU$ is monotone if it satisfies
\begin{align}\label{ineq:monotone}
\big(DU(\vec{\tilde{\beta}}) - DU(\vec{\beta})\big)[\tilde{\beta} - \beta] \leq 0 \qquad\forall \beta, \tilde{\beta} \in \mathcal{B}_\delta .
\end{align}
For pseudo-monotonicity we require \citep{khanh2016ModifiedExtragradientMethod}
\begin{align}\label{ineq:pseudomonotone}
DU(\vec{\beta})[\tilde{\beta} - \beta] \leq 0 \quad \Rightarrow \quad DU(\vec{\tilde{\beta}})[\tilde{\beta} - \beta] \leq 0, \qquad\forall \beta, \tilde{\beta} \in \mathcal{B}_\delta ,
\end{align}
while quasi-monotonicity requires \eqref{ineq:pseudomonotone} with strict inequality on the left-hand side.
Note that monotonicity implies pseudo-monotonicity, which, in turn, implies quasi-monotonicity.

\begin{proposition}\label{prop:nonmonotone:SP}
The operator $DU$ is neither monotone, nor pseudo- nor quasi-monotone (for $\delta < \frac{9}{100}$, $F = \textup{Id}$ and $n = 2$).
\end{proposition}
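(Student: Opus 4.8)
The plan is to refute the weakest of the three properties — \emph{quasi}-monotonicity — via a single pair of admissible strategies; since monotonicity $\Rightarrow$ pseudo-monotonicity $\Rightarrow$ quasi-monotonicity, this disproves all three simultaneously. Concretely, we want $\beta,\tilde\beta\in\mathcal{B}_\delta$ with
\[
  DU(\vec{\beta})[\tilde\beta-\beta]<0 \qquad\text{and}\qquad DU(\vec{\tilde\beta})[\tilde\beta-\beta]>0 .
\]
To make this tractable, first specialize the Gateaux-derivative \eqref{eq:derivative:SP}: for $n=2$ one has $G=F^{n-1}=F=\mathrm{Id}$, hence $g\equiv1$; moreover every $\gamma\in\mathcal{B}_\delta$ is strictly increasing, so $\gamma^{-1}(\gamma(x))=x$ and $\gamma(x)\in\mathrm{Im}(\gamma)$ for a.e.\ $x$, and on the symmetric profile $\vec\gamma=(\gamma,\gamma)$
\[
  DU(\vec\gamma)[d]=\int_0^1 d(x)\,\frac{x-\gamma(x)}{\gamma'(x)}\,\diff x .
\]
Putting $d:=\tilde\beta-\beta$, $\beta_t:=\beta+td$, and $\ell(t):=DU(\vec{\beta_t})[d]=\int_0^1 d(x)\,\frac{(x-\beta(x))-t\,d(x)}{\beta'(x)+t\,d'(x)}\,\diff x$, the goal reduces to exhibiting $\beta,\tilde\beta\in\mathcal{B}_\delta$ with $\ell(0)<0<\ell(1)$.

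The construction is carried out in the subspace of piecewise-linear strategies, where a function $\gamma$ lies in $\mathcal{B}_\delta$ iff $\gamma(0)=0$, every slope is $\ge\delta$, and (since $\gamma$ is then increasing) $\gamma(1)\le1$. For such $\beta,\tilde\beta$ the integrands of $\ell(0)$ and $\ell(1)$ are, on each linear piece, products of affine functions divided by a constant slope, so both quantities become explicit finite sums — rational functions of the slopes and breakpoints — and one just has to choose parameters making $\ell(0)<0<\ell(1)$ (for instance, located by a numerical search). The mechanism is that the weight $\tfrac{x-\gamma(x)}{\gamma'(x)}$ is \emph{amplified} by $1/\gamma'$ on pieces where $\gamma$ has (near-)minimal slope $\delta$ and \emph{damped} on steep pieces: one makes $\beta$ flat on an interval where $d\cdot(x-\beta)<0$, so that interval dominates $\ell(0)$ and drives it negative, while $\tilde\beta$ is steep there and the interval is negligible for $\ell(1)$; symmetrically, $\tilde\beta$ is flat on an interval where $d\cdot(x-\tilde\beta)>0$, dominating $\ell(1)$ and driving it positive, with $\beta$ steep there; on the remaining pieces $d$ (or $x-\beta$, or $x-\tilde\beta$) is kept small so its contribution is harmless.

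The crux is the \emph{joint feasibility} of such a construction under the two constraints defining $\mathcal{B}_\delta$. Since $\beta$ and $\tilde\beta$ both start at $0$ and are bounded by $1$, any function that is very flat on one piece and very steep on another has only a bounded total rise, and this quantitative trade-off is what pins down the threshold $\delta<\tfrac{9}{100}$. More delicate still is the slope floor: on a piece where $\beta$ attains the minimal slope $\delta$, admissibility of $\beta+\varepsilon d$ forces $d'\ge0$ there (and $d'\le0$ on a minimal-slope piece of $\tilde\beta$), so the perturbation cannot reverse its monotonicity on such a piece; this rules out naive \emph{bump}-type directions and pushes all sign changes of $d$ onto the steep pieces, which is exactly what makes the bookkeeping tight. (A purely infinitesimal version — producing $\beta$ and a tangent direction $d$ with $\ell'(0)>0$ — would already refute \emph{monotonicity}, but not quasi-monotonicity, which additionally needs $\ell(0)<0$; hence one genuinely needs a finite separation between $\beta$ and $\tilde\beta$.) Once an explicit admissible pair is fixed and $\ell(0)<0<\ell(1)$ is checked by direct evaluation of the resulting elementary integrals for all $\delta<\tfrac{9}{100}$, quasi-monotonicity fails, and hence so do pseudo-monotonicity and monotonicity.
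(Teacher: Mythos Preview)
Your approach coincides with the paper's: specialize \eqref{eq:derivative:SP} to $n=2$, $F=\mathrm{Id}$, reduce to the one-line condition
\[
\int_0^1 \frac{(\tilde\beta-\beta)(x-\beta)}{\beta'}\,\diff x<0
\quad\Rightarrow\quad
\int_0^1 \frac{(\tilde\beta-\beta)(x-\tilde\beta)}{\tilde\beta'}\,\diff x\le0,
\]
and exhibit a piecewise-linear pair violating it; the failure of quasi-monotonicity then propagates to pseudo- and full monotonicity. The mechanism you describe --- amplification by $1/\gamma'$ on a near-minimal-slope piece --- is exactly the right intuition.

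That said, as written the proposal has a genuine gap: for a counterexample result the proof \emph{is} the explicit pair, and you never produce one. All the qualitative discussion of feasibility and the ``delicate'' slope-floor constraint is scaffolding; without concrete $\beta,\tilde\beta$ and a verified sign for both integrals there is no proof. The paper in fact uses a much simpler pair than your symmetric two-flat-piece scheme suggests: $\beta$ is a single linear function $\beta(x)=\tfrac{61}{100}x$, and only $\tilde\beta$ carries the structure (three linear pieces on $[0,\tfrac13]$, $[\tfrac13,\tfrac23]$, $[\tfrac23,1]$ with slopes $1$, $\tfrac{9}{100}$, $\tfrac{63}{100}$). The threshold $\delta<\tfrac{9}{100}$ is not the outcome of a ``quantitative trade-off'' as you describe --- it is simply the smallest slope occurring in this particular $\tilde\beta$, needed so that $\tilde\beta\in\mathcal{B}_\delta$. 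Your discussion of the slope-floor forcing $d'\ge0$ on minimal-slope pieces of $\beta$ is correct but moot here, since $\beta$ never attains the minimal slope. To complete your plan, write down an explicit pair (the paper's works, or any you find numerically), and evaluate the two elementary integrals.
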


\begin{proof}Plugging \eqref{eq:derivative:SP} into the quasi-monotonicity condition \eqref{ineq:pseudomonotone} yields
\begin{align*}
\int_0^1 \frac{\big(\tilde{\beta}(x) - \beta(x)\big) \big(x - \beta(x)\big) }{\beta'(x)} \diff x < 0
\ \ \Rightarrow \ \ 
\int_0^1 \frac{\big(\tilde{\beta}(x) - \beta(x)\big) \big(x - \tilde{\beta}(x)\big)}{\tilde{\beta}'(x)} \diff x \leq 0 .
\end{align*}
A counterexample is given by the piece-wise linear and continuous functions
\begin{align*}
\beta(x) &= \frac{61 x}{100} ,&
\tilde{\beta}(x) &= \begin{cases} x & \text{if } 0 \leq x \leq \frac{1}{3} ,\\
	\frac{9 x}{100} + \frac{91}{300} & \text{if } \frac{1}{3} < x \leq \frac{2}{3} ,\\
	\frac{63 x}{100} - \frac{17}{300} & \text{if } \frac{2}{3} < x \leq 1,
\end{cases}
\end{align*}
which yield a negative integral on the left-hand side, but a positive one on the right-hand side.
Therefore, $DU$ is not quasi-monotone and consequently neither pseudo-monotone nor monotone.
\end{proof}
Note that even though this second-price auction leads to a non-monotone VI, a Minty-type solution exists, which ensures the convergence of projection type methods \citep{Song2020OptDualExtrapolation, huang2023beyond}, see also Appendix \ref{sec:appendix}.

\section{First-price sealed-bid auction}
For symmetric first-price sealed-bid auctions with risk-neutral bidders, we have
\[
u(\vec{b}, \vec{x}) = \chi_{\{b_1 > \max\limits_{j=2,\dots,n} b_j\}} \left(x_1 - b_1\right)
\]
As for the second-price auction, using $Y := \max_{j=2,\dots,n} X_j \sim G := F^{n-1}$, the ex-ante utility $U$ against symmetric strategies $\vec{\tilde{\beta}} = (\tilde{\beta}, \dots, \tilde{\beta})$ can be reformulated as
\begin{align*}
U(\beta, \vec{\tilde{\beta}}_{-1})
&= \int_{[0,1]^n} \chi_{\{\beta(x_1) > \max\limits_{j=2,\dots,n} \tilde{\beta}(x_j) \}} \big(x_1 - \beta(x_1)\big) \diff F(x_1) \cdots \diff F(x_n) \\
&= \int_0^1 \int_0^1 \chi_{\{\beta(x) > \tilde{\beta}(y) \}} \big(x - \beta(x)\big) \diff G(y) \diff F(x) \\
&= \int_0^1 \big(x - \beta(x)\big) \int_0^1 \chi_{\{\beta(x) > \tilde{\beta}(y)\}} \diff G(y) \diff F(x) .
\end{align*}
The Gateaux-derivative at $(\beta,\vec{\tilde{\beta}}_{-1}) \in \vec{\mathcal{B}}_\delta$ along $d \in V$, can be computed as:
\begin{align*}
DU&(\beta,\vec{\tilde{\beta}}_{-1})[d]
= \lim_{\varepsilon \to 0} \varepsilon^{-1} \big(U(\beta+\varepsilon d, \vec{\tilde{\beta}}_{-1}) - U(\beta, \vec{\tilde{\beta}}_{-1})\big) \\
&= \lim_{\varepsilon \to 0} \frac{1}{\varepsilon} \int_0^1 \big(x - \beta(x)- \varepsilon d(x) \big) \int_{\{\beta(x) + \varepsilon d(x) > \tilde{\beta}(y)\}} \diff G(y) 
- \big(x - \beta(x)\big) \int_{\{\beta(x) > \tilde{\beta}(y)\}} \diff G(y) \diff F(x) \\
&= \lim_{\varepsilon \to 0} \int_0^1 \bigg[ \big(x - \beta(x)\big) \frac{1}{\varepsilon} \int_0^1 \chi_{\{\beta(x)+\varepsilon d(x) > \tilde{\beta}(y)\}} - \chi_{\{\beta(x) > \tilde{\beta}(y)\}} \diff G(y) \\&\hspace{2cm}
- d(x) \int_0^1 \chi_{\{\beta(x)+\varepsilon d(x) > \tilde{\beta}(y)\}} \diff G(y) \bigg] \diff F(x) .
\end{align*}
Analogously to the derivation in the previous section, we obtain for $\tilde{\beta} \in \mathcal{B}_\delta$ that the first term converges to $\chi_{\{\beta(x) {< \tilde{\beta}(1)}\}}(x-\beta(x)) d(x) (G \circ \tilde{\beta}^{-1})'(\beta(x))$, while the second one yields $\chi_{\{\beta(x) {< \tilde{\beta}(1)}\}} d(x) G(\tilde{\beta}^{-1}(\beta(x)))$, such that we obtain
\begin{multline}\label{eq:derivative:FP}
DU(\beta,\vec{\tilde{\beta}}_{-1})[d]
= \int_0^1 d(x) \chi_{\{\beta(x) {< \tilde{\beta}(1)}\}} \bigg[ \big(x - \beta(x)\big) \frac{g(\tilde{\beta}^{-1}(\beta(x)))}{\tilde{\beta}'(\tilde{\beta}^{-1}(\beta(x)))}- G(\tilde{\beta}^{-1}(\beta(x))) \bigg] \diff F(x) .
\end{multline}
Hence, the operator $DU(\beta, \vec{\tilde{\beta}})$ is linear and in $V^*$.

\subsection{Uniqueness of BNE and non-existence of MVI}

For symmetric first-price sealed-bid auctions with independent private values, we can show that the unique BNE  coincides with a solution of the VI (which is unique in the interior of $\mathcal{B}_\delta$). Even in the simple case of two bidders ($n=2$) with uniform priors ($F=\textup{Id}$), no solution to the MVI exists. So, in contrast to second-price auctions, these notions are different for first-price auctions.

\begin{lemma}\label{la:FP:BNE}
Assume that $f = F'$ is Lipschitz-continuous and satisfies
\[ \delta_0 := \frac{\inf_{x \in [0, 1]} f(x)}{\sup_{x \in [0, 1]} f(x)} > 0. \]
For $0 < \delta \leq \delta_0$,
$\beta^*(x) = \frac{1}{G(x)} \int_0^x y \diff G(y)$ is the unique solution to the symmetric VI \eqref{ineq:symVI} in the interior of $\mathcal{B}_\delta$. This solution is the unique symmetric BNE of \eqref{ineq:symBNE}.
\end{lemma}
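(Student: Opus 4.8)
The plan is to build everything on the first–order identity that the candidate $\beta^*$ satisfies by construction. Since $G(x)\beta^*(x)=\int_0^x y\,\diff G(y)$, the product $G\beta^*$ is absolutely continuous on $[0,1]$, and because $G(0)=F(0)^{n-1}=0$ it follows that $\beta^*(0)=0$. Differentiating gives $g(x)\beta^*(x)+G(x){\beta^*}'(x)=x\,g(x)$ for $F$–a.e.\ $x$, i.e.\ $G(x){\beta^*}'(x)=g(x)(x-\beta^*(x))$, equivalently $\frac{g(x)}{{\beta^*}'(x)}=\frac{G(x)}{x-\beta^*(x)}$. From this representation I would read off $0\le\beta^*(x)\le x$ (hence $\beta^*\le1$), the strict monotonicity of $\beta^*$, and --- using $\delta_0\sup f\le f\le\sup f$ to sandwich $F$ and $G=F^{n-1}$ --- a uniform positive lower bound ${\beta^*}'\ge c(n,\delta_0)>0$ together with $0<\beta^*<1$ on $(0,1]$. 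Hence, for $\delta$ small enough (this is where the hypothesis $\delta_0>0$, respectively $\delta\le\delta_0$, is used), $\beta^*$ lies in the interior of $\mathcal B_\delta$.

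Next I would verify that $\beta^*$ solves the symmetric VI \eqref{ineq:symVI} and is the only interior solution. Putting $\beta=\tilde\beta=\beta^*$ into \eqref{eq:derivative:FP}, one has $(\beta^*)^{-1}(\beta^*(x))=x$, and by strict monotonicity $\chi_{\{\beta^*(x)<\beta^*(1)\}}=1$ for $F$–a.e.\ $x$, so the bracket collapses to $(x-\beta^*(x))\frac{g(x)}{{\beta^*}'(x)}-G(x)$, which vanishes $F$–a.e.\ by the identity above; thus $DU(\vec{\beta^*})[d]=0$ for every $d\in V$, so in particular \eqref{ineq:symVI} holds. Conversely, let $\beta^*\in\mathcal B_\delta$ be an interior solution of \eqref{ineq:symVI}. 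Since only the constraint $\beta(0)=0$ is active, for each $d$ in a dense family of sufficiently regular directions with $d(0)=0$ one has $\beta^*\pm\varepsilon d\in\mathcal B_\delta$ for small $\varepsilon$, so the VI forces $DU(\vec{\beta^*})[d]=0$; the bracket being bounded (hence in $L^1(0,1;F)$, using ${\beta^*}'\ge\delta$ and Lipschitz $f$), the fundamental lemma of the calculus of variations yields that it vanishes $F$–a.e. That is precisely the ODE $G{\beta^*}'=g(x-\beta^*)$ a.e., i.e.\ $(G\beta^*)'=x\,g$ a.e.; integrating the absolutely continuous function $G\beta^*$ from $0$ (and $G(0)=0$) recovers $G(x)\beta^*(x)=\int_0^x y\,\diff G(y)$, so the interior VI solution is unique and equals the stated $\beta^*$.

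For the BNE claim I would exploit that the deviation payoff is a pointwise functional: $U(\beta,\vec{\beta}_{-1}^*)=\int_0^1\Psi(x,\beta(x))\,\diff F(x)$ with $\Psi(x,b):=(x-b)\,\Phi(b)$, where $\Phi(b):=\mathbb{P}[\beta^*(Y)\le b]$ equals $G((\beta^*)^{-1}(b))$ on $[0,\beta^*(1)]$ and $1$ on $[\beta^*(1),1]$. It then suffices to show that for each $x$ the map $b\mapsto\Psi(x,b)$ on $[0,1]$ has the unique maximizer $b=\beta^*(x)$. Writing $b=\beta^*(z)$ and using $\frac{g(z)}{{\beta^*}'(z)}=\frac{G(z)}{z-\beta^*(z)}$, one computes $\partial_b\Psi(x,\beta^*(z))=G(z)\,\frac{x-z}{z-\beta^*(z)}$ for $z\in(0,1)$, whose sign is that of $x-z$ since $G(z)>0$ and $z>\beta^*(z)$; hence $\Psi(x,\cdot)$ strictly increases on $[0,\beta^*(x)]$ and strictly decreases on $[\beta^*(x),1]$ (the piece $b\ge\beta^*(1)$ equalling $x-b$, still decreasing, and matching continuously at $\beta^*(1)$). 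Consequently $U(\beta,\vec{\beta}_{-1}^*)\le U(\vec{\beta^*})$ for all $\beta\in\mathcal B_\delta$ with equality iff $\beta=\beta^*$ $F$–a.e., so $\beta^*$ is a symmetric BNE. For its uniqueness I would apply the same pointwise analysis to an arbitrary symmetric BNE $\tilde\beta$: being a best response to itself, $\tilde\beta$ must --- after excluding boundary profiles ($\tilde\beta\equiv1$ or ${\tilde\beta}'\equiv\delta$ on a set of positive $F$–measure) by a payoff–domination argument, and noting optimal bids lie strictly between $0$ and the valuation --- satisfy the first–order identity $G{\tilde\beta}'=g(x-\tilde\beta)$ $F$–a.e., whence $\tilde\beta=\beta^*$. (Equivalently, the unimodality just established is strict pseudoconcavity of $U$ in its first argument, so that the VI and BNE notions coincide here.)

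I expect two steps to carry the weight. The first is the admissibility bookkeeping of the opening paragraph: controlling ${\beta^*}'$ from below and $\beta^*$ away from $0$ and $1$ needs the $\delta_0$–estimates on $F$ and $G=F^{n-1}$, and pinning down exactly how small $\delta$ must be chosen in terms of $\delta_0$ and $n$. The second is making the ``interior $\Rightarrow$ the bracket vanishes $F$–a.e.'' step precise --- one must exhibit a dense family of test directions $d$ that are regular enough (bounded with bounded derivative, $d(0)=0$) for $\beta^*\pm\varepsilon d$ to stay in $\mathcal B_\delta$, and then invoke the variational lemma in the $F$–weighted $L^1$ space --- together with the (routine but nontrivial) exclusion of boundary BNE needed for full uniqueness. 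The indicators $\chi_{\{\beta(x)<\tilde\beta(1)\}}$ are harmless throughout because $\beta^*$ is continuous and strictly increasing.
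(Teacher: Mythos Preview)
Your argument is correct and in fact more self-contained than the paper's. Two substantive differences: (i) the paper does not prove the BNE claim at all but simply cites \citet{krishna2009AuctionTheory} and \citet{chawla2013auctions}, whereas you establish it directly via the strict unimodality of $b\mapsto (x-b)\,G\big((\beta^*)^{-1}(b)\big)$---which has the pleasant side effect of exhibiting strict pseudoconcavity of $U$ in its first argument and hence explaining \emph{why} the VI and BNE notions coincide here; (ii) for uniqueness of the interior VI solution, the paper rearranges to $(G\beta^*)'=xg$ and then invokes Picard--Lindel\"of (this is precisely where the Lipschitz hypothesis on $f$ is consumed), while you simply integrate that identity from $0$ using $G(0)=0$; your route is arguably cleaner for this linear first-order equation and makes the role of the boundary condition $\beta(0)=0$ transparent.

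Where the paper is sharper is the admissibility bookkeeping you flag as ``carrying the weight'': it computes explicitly
\[
{\beta^*}'(x)=\frac{g(x)}{G(x)^2}\int_0^x G(y)\,\diff y
=\frac{f(x)\int_0^x F(y)^{n-1}\,\diff y}{\int_0^x f(y)\,F(y)^{n-1}\,\diff y}
\ge \frac{\inf f}{\sup f}=\delta_0,
\]
which is exactly why the threshold in the statement is $\delta_0$ rather than an unspecified $c(n,\delta_0)$. Carrying out that one-line bound would close your first paragraph without the separate sandwiching of $F$ and $G$. Your exclusion of boundary BNE profiles in the last step is admittedly sketchier than the rest of your proof, but since the paper outsources the entire BNE uniqueness claim to the literature, you are not omitting anything the paper actually proves.
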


\begin{proof}
	The unique symmetric BNE is given by $\beta^*(x) = \frac{1}{G(x)} \int_0^x y \diff G(y)$ \citep{krishna2009AuctionTheory, chawla2013auctions}. Next, we show that this is the unique solution of the symmetric VI \eqref{ineq:symVI} in the interior of $\mathcal{B}_\delta$. Using \eqref{eq:derivative:FP}, the symmetric VI \eqref{ineq:symVI} reads
\begin{align}\label{ineq:symVI:FP}
\int_{\mathcal{X}} \Big[ \big(x - \beta^*(x)\big) \frac{g(x)}{{\beta^*}'(x)} - G(x) \Big] \big(\beta(x) - \beta^*(x)\big) \diff F(x)
\leq 0 \qquad \forall \beta \in \mathcal{B}_\delta.
\end{align}
A solution in the interior of $\mathcal{B}_\delta$ must satisfy
\begin{align*}
\big(x - \beta^*(x)\big) \frac{g(x)}{{\beta^*}'(x)} - G(x) = 0 \qquad\text{$F$-a.e.}
\end{align*}
This ODE can be rearranged to $\frac{\diff}{\diff x} (G(x)\beta^*(x)) = x g(x)$. Since $f$ is Lipschitz-continuous, the same holds for the right-hand side $xg(x) = x f(x) (F(x))^{n-2}$. By the Picard--Lindel\"of theorem the unique solution in the interior of $\mathcal{B}_\delta$ is given by $\beta^*(x) = \frac{1}{G(x)} \int_0^x y \diff G(y)$ due to $\beta(0) = 0$ for any $\beta \in \mathcal{B}_\delta$. Note that $\beta^* \in \mathcal{B}_\delta$ since
\begin{align*}
{\beta^*}'(x)
= \frac{g(x)}{(G(x))^2} \int_0^x G(y) \diff y
&= \frac{f(x) \int_0^x (F(y))^{n-1} \diff y}{\int_0^x f(y) (F(y))^{n-1} \diff y} \\
&\geq \frac{\inf_{z \in [0,1]} f(z) \int_0^x (F(y))^{n-1} \diff y}{\int_0^x \sup_{z \in [0,1]} f(z) (F(y))^{n-1} \diff y} 
= \frac{\inf_{z \in [0,1]} f(z)}{\sup_{z \in [0,1]} f(z)}
= \delta_0 \geq \delta > 0.
\end{align*}
Then, $\beta^* \in \mathcal{B}_\delta$ satisfies \eqref{ineq:symVI:FP}.
\end{proof}

\begin{remark}
Note that Lemma \ref{la:FP:BNE} holds for all $0 < \delta \leq \delta_0$. {Hence,} a limit argument shows that the BNE is the unique solution to the symmetric VI \eqref{ineq:symVI:FP} in the interior of the class of uniformly increasing functions $B_{0+} := \bigcup_{\delta > 0} B_\delta$. If a solution $\beta$ to the VI at the boundary $\partial B_{0+}$ exists, {its} derivative $\beta'$ {must approach} zero at some point $x \in [0,1]$, such that the expression $DU(\vec{\beta})$ might be ill-defined.
{In particular, this implies that a gradient-based learning algorithm must reach the BNE if it does converge (within $B_{0+}$).}
\end{remark}

\begin{lemma}
In the case of two bidders with uniform priors, i.e., for $n=2$ and $F=\textup{Id}$, the unique BNE $\beta^*(x) = \frac{x}{2}$ according to Lemma~\ref{la:FP:BNE} does not satisfy the symmetric MVI \eqref{ineq:symMVI}. In particular, {the condition is also not satisfied locally for any open neighborhood of the BNE (for $\delta \leq \frac{1}{5}$).} 
\end{lemma}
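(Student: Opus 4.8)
**Plan for proving that the BNE $\beta^*(x)=x/2$ fails the MVI (globally and locally).**

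The plan is to exhibit explicit test strategies $\beta,\tilde\beta\in\mathcal B_\delta$ for which the MVI inequality \eqref{ineq:symMVI} is violated, i.e.\ for which $DU(\beta,\vec{\tilde\beta}_{-1})[\beta-\beta^*]>0$. Specializing \eqref{eq:derivative:FP} to $n=2$, $F=\textup{Id}$ (so $G=\textup{Id}$, $g\equiv 1$), the MVI at $\beta^*=\tfrac{\cdot}{2}$ reads
\begin{align*}
DU(\beta,\vec{\tilde\beta}_{-1})\big[\beta-\beta^*\big]
=\int_0^1 \chi_{\{\beta(x)<\tilde\beta(1)\}}\Big(\beta(x)-\tfrac{x}{2}\Big)\Big[\frac{x-\beta(x)}{\tilde\beta'(\tilde\beta^{-1}(\beta(x)))}-\tilde\beta^{-1}(\beta(x))\Big]\diff x,
\end{align*}
and I need to make this strictly positive for some admissible pair. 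The cleanest route is to take $\tilde\beta=\textup{Id}$ (so $\tilde\beta^{-1}=\textup{Id}$, $\tilde\beta'\equiv1$, $\tilde\beta(1)=1$), which collapses the bracket to $(x-\beta(x))-\beta(x)=x-2\beta(x)$ and the integrand to $(\beta(x)-\tfrac x2)(x-2\beta(x))=-2(\beta(x)-\tfrac x2)^2\le0$ — that choice is \emph{too} good and gives the wrong sign, so I instead want $\tilde\beta$ chosen so that the ``best-response pull'' term $-\tilde\beta^{-1}(\beta(x))$ is weak while the ``overbidding-is-cheap'' term is strong. Concretely I would pick $\tilde\beta$ that rises very steeply near $0$ (so $\tilde\beta^{-1}$ is tiny on the relevant range) and then take $\beta$ slightly \emph{above} $\beta^*$ on the bulk of $[0,1]$; then $\beta(x)-\tfrac x2>0$ and the bracket is dominated by the positive term $\tfrac{x-\beta(x)}{\tilde\beta'}$, yielding a positive contribution. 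I expect a piecewise-linear pair analogous to the one used in Proposition~\ref{prop:nonmonotone:SP} to work, and the verification is just evaluating a handful of elementary integrals of piecewise-rational functions.

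For the \emph{local} statement, I would exploit the scaling freedom: replace the global $\tilde\beta$, $\beta$ by strategies that equal $\beta^*$ outside a small window and deviate only on a sub-interval, or rescale the counterexample toward $\beta^*$. The key point is that the MVI inequality \eqref{ineq:symMVI} is an integral over \emph{all} of $[0,1]$ of a term with a definite structure; since $\beta^*$ is in the interior of $\mathcal B_\delta$ for $\delta\le\delta_0=\tfrac12$ — wait, here $\delta\le\tfrac15$ is assumed — I can take $\beta=\beta^*+\eta\,d$ and $\tilde\beta=\beta^*+\eta\,e$ for small $\eta>0$ with $d,e$ chosen as the ``directions'' of the global counterexample (suitably truncated so that $\beta^*\pm\eta d$ still has derivative $\ge\delta$). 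Expanding the integrand to leading order in $\eta$ and showing the leading nonzero coefficient is positive establishes that no open neighborhood of $\beta^*$ in $\mathcal B_\delta$ satisfies the MVI. The role of the hypothesis $\delta\le\tfrac15$ is exactly to guarantee that the perturbed strategies remain admissible.

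The main obstacle is bookkeeping rather than conceptual: getting the bracket $\big[(x-\beta(x))(G\circ\tilde\beta^{-1})'(\beta(x))-G(\tilde\beta^{-1}(\beta(x)))\big]$ and the sign of $(\beta(x)-\beta^*(x))$ to cooperate simultaneously, while keeping $\beta,\tilde\beta$ inside $\mathcal B_\delta$ (monotone, derivative $\ge\delta$, range in $[0,1]$, value $0$ at $0$, and the support condition $\beta(x)<\tilde\beta(1)$ so the indicator does not kill the contribution). I would first do a rough continuous/asymptotic computation with an idealized $\tilde\beta$ to confirm the sign, then discretize to an explicit piecewise-linear pair and verify the two integrals numerically/symbolically, exactly as in the second-price counterexample. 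For the local version, the extra care is in the truncation of the perturbation directions near $x=0$ (where $\beta^{*\prime}(0)$ may be small) so that $\delta\le\tfrac15$ suffices for admissibility of $\beta^*\pm\eta d$.
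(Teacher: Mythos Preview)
Your plan overlooks the key simplification the paper exploits: to violate the symmetric MVI \eqref{ineq:symMVI} it suffices to violate it on the \emph{diagonal} $\tilde\beta=\beta$. With $n=2$, $F=\textup{Id}$, this collapses \eqref{eq:derivative:FP} to the single-unknown condition
\[
\int_0^1 \Big[\frac{x-\beta(x)}{\beta'(x)}-x\Big]\big(\beta(x)-\tfrac{x}{2}\big)\,\diff x\ \le\ 0\qquad\forall\,\beta\in\mathcal B_\delta,
\]
and the paper exhibits an explicit piecewise-linear family $\beta_n$ (equal to $\beta^*$ on $[0,\tfrac{n}{n+2}]$, then slopes $\tfrac45$ and $\tfrac15$) for which this integral is strictly positive for every $n$, with $\beta_n\to\beta^*$ in $V$. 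This handles the global and the local statements simultaneously, and the constraint $\delta\le\tfrac15$ is exactly the minimal slope $\tfrac15$ appearing in $\beta_n$. Your two-variable route with a separately engineered $\tilde\beta$ is not needed.

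Two further points. First, your heuristic for the global pair is slightly off: making $\tilde\beta$ rise ``very steeply near $0$'' makes $\tilde\beta'(\tilde\beta^{-1}(\beta(x)))$ large, which \emph{weakens} the positive term $(x-\beta(x))/\tilde\beta'$, not strengthens it. You would in fact want the opposite, a region where $\tilde\beta'$ is \emph{small} at the preimage of $\beta(x)$; but then you must also keep $\tilde\beta^{-1}(\beta(x))$ small and the indicator $\chi_{\{\beta<\tilde\beta(1)\}}$ nontrivial, which is delicate. Second, your Taylor-expansion plan for the local statement hides an obstruction: on the diagonal $\tilde\beta=\beta=\beta^*+\eta d$ with \emph{fixed} $d$, one computes
\[
DU(\vec\beta)[\beta-\beta^*]=-2\eta^2\!\int_0^1\!\frac{d(x)\,(xd(x))'}{1+2\eta d'(x)}\,\diff x
\ =\ -\eta^2\Big(\int_0^1 d^2\,\diff x + d(1)^2\Big)+O(\eta^3)\ \le\ 0,
\]
so the leading nonzero coefficient has the \emph{wrong} sign. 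The paper's $\beta_n$ evades this because it is not of the form $\beta^*+\eta d$ with bounded $d'$: the derivative $\beta_n'$ stays a fixed distance $\tfrac{3}{10}$ from $\tfrac12$ on a shrinking interval, so the denominator $1+2d_n'$ is $O(1)$ away from $1$ and the expansion above is invalid. An off-diagonal perturbation $(\beta,\tilde\beta)=(\beta^*+\eta d,\beta^*+\eta e)$ with $e\neq d$ can indeed produce a positive second-order coefficient, but you would have to carry that computation out; as written, your plan does not.
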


\begin{proof}Using \eqref{eq:derivative:FP}, the symmetric MVI \eqref{ineq:symMVI} reads
\begin{align}\label{ineq:symMVI:FP}
\int_0^1 \Big[ \frac{x - \beta(x)}{{\beta}'(x)} - x \Big] \big(\beta(x) - \beta^*(x)\big) \diff F(x)
\leq 0 \qquad \forall \beta \in \mathcal{B}_\delta.
\end{align}
Using $\beta^*(x) = \frac{x}{2}$ and the continuous, piece-wise linear and strictly increasing bid function
\[
\beta(x) = \begin{cases}
\frac{x}{2} & \text{for}\ x \leq \frac{n}{n+2} ,\\
\frac{n}{2(n+2)} + \frac{4}{5}\left(x - \frac{n}{n+2}\right) & \text{for}\ \frac{n}{n+2} < x \leq \frac{n+1}{n+2} ,\\
\frac{n}{2(n+2)} + \frac{4}{5(n+2)} + \frac{1}{5} \left(x - \frac{n+1}{n+2}\right) & \text{for}\ \frac{n+1}{n+2} < x ,
\end{cases}
\]
for arbitrary $n \in \mathbb{N}_0$, we obtain on the left{-hand side} of \eqref{ineq:symMVI:FP} a positive value
which then contradicts \eqref{ineq:symMVI:FP}. In particular, this variational stability condition is even violated locally, since $\beta \to \beta^*$ in $V$ as $n \to \infty$.
\end{proof}

\subsection{Monotonicity}

As before, we are interested in the situation $\beta = \tilde{\beta}$ along which gradient-based learning takes place, and {study} whether the operator $DU$ is (quasi-)monotone in $\mathcal{B}_\delta$.
Again, we show that even in the most simple setting of two bidders ($n=2$) with uniform priors ($F = \textup{Id}$), the operator $DU$ turns out to be neither monotone, nor pseudo- nor quasi-monotone.

\begin{proposition}
The operator $DU$ is neither monotone, nor pseudo- nor quasi-monotone (for $0 < \delta \leq \frac{1}{10}$).
\end{proposition}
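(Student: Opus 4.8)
The plan is to exhibit a single explicit pair $(\beta, \tilde\beta) \in \mathcal{B}_\delta$ that violates quasi-monotonicity; since monotonicity implies pseudo-monotonicity implies quasi-monotonicity, refuting the weakest of the three suffices. Following the pattern of Proposition~\ref{prop:nonmonotone:SP}, I would first specialize the derivative formula \eqref{eq:derivative:FP} to the case $n=2$, $F=\textup{Id}$ and evaluate it along the diagonal direction relevant to gradient play, namely $d = \tilde\beta - \beta$ at the point $\vec{\beta}$ and then at $\vec{\tilde\beta}$. Concretely, with $G = \textup{Id}$ and $g \equiv 1$, the quasi-monotonicity condition \eqref{ineq:pseudomonotone} becomes the implication
\begin{align*}
\int_0^1 \Big[ \frac{x - \beta(x)}{\beta'(x)} - \beta(x) \Big]\big(\tilde\beta(x) - \beta(x)\big)\,\diff x < 0
\ \Rightarrow\
\int_0^1 \Big[ \frac{x - \tilde\beta(x)}{\tilde\beta'(x)} - \tilde\beta(x) \Big]\big(\tilde\beta(x) - \beta(x)\big)\,\diff x \leq 0,
\end{align*}
valid for all $\beta,\tilde\beta \in \mathcal{B}_\delta$. (One must keep track of the indicator $\chi_{\{\beta(x) < \tilde\beta(1)\}}$, so I would choose the functions to have $\tilde\beta(1) = \beta(1) = 1$, or at least so that the relevant images coincide, making the indicator identically $1$.)

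The core of the argument is to produce the counterexample. I would take $\beta$ to be a simple linear function (e.g.\ a scalar multiple of $\textup{Id}$, or $\textup{Id}$ itself) and $\tilde\beta$ a continuous piecewise-linear function with three pieces — one segment on which $\tilde\beta$ is close to the BNE $x/2$, one steep segment, and one shallow segment — chosen exactly as in the MVI-violation lemma and in Proposition~\ref{prop:nonmonotone:SP}, with slopes bounded below by $\delta$ for $\delta \leq \tfrac{1}{10}$. On each linear piece the two integrands are rational functions of $x$ with constant denominators $\beta'$ and $\tilde\beta'$, so both integrals reduce to sums of elementary polynomial integrals over the breakpoint subintervals; I would pick the breakpoints (natural candidates are $\tfrac13, \tfrac23$, as in the second-price case) and the slopes so that the left-hand integral is strictly negative while the right-hand integral is strictly positive. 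This is a finite computation that can be verified directly, and then quasi-monotonicity, and a fortiori pseudo-monotonicity and monotonicity, all fail.

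The main obstacle is purely a search problem: the sign of $\frac{x-\tilde\beta(x)}{\tilde\beta'(x)} - \tilde\beta(x)$ flips depending on whether $\tilde\beta$ lies above or below the BNE and on how steep it is, so one must balance the contributions of the three segments so that the "cross term" $(\tilde\beta - \beta)$ is weighted to make the two integrals have opposite signs simultaneously while staying inside $\mathcal{B}_\delta$ (i.e.\ $0 \le \tilde\beta \le 1$, $\tilde\beta' \ge \delta$, $\tilde\beta(0)=0$). Since the paper already uses essentially this construction for the second-price auction and for the MVI non-existence, I would expect a minor modification of those same piecewise-linear functions — adjusting the slopes and the single linear comparator $\beta$ — to do the job, and the write-up would simply state the chosen $\beta, \tilde\beta$ and report the resulting signed values of the two integrals, leaving the routine arithmetic to the reader or an appendix.
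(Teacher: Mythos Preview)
Your approach is exactly the paper's: specialize to $n=2$, $F=\textup{Id}$, write out the quasi-monotonicity implication, and violate it with a linear $\beta$ against a three-piece piecewise-linear $\tilde\beta$ with breakpoints at $\tfrac13,\tfrac23$. The paper uses $\beta(x)=\tfrac{61x}{100}$ and $\tilde\beta$ with slopes $1,\tfrac{1}{10},\tfrac{63}{100}$, which is precisely the shape you anticipate.

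There is one slip in your specialized formula. At the symmetric point $\vec\beta$ the opponents' strategy in \eqref{eq:derivative:FP} is $\beta$ itself, so $\tilde\beta^{-1}(\beta(x))=x$ and the term $G(\tilde\beta^{-1}(\beta(x)))$ becomes $G(x)=x$, \emph{not} $\beta(x)$. Thus the bracket should read $\tfrac{x-\beta(x)}{\beta'(x)}-x$ (and likewise $\tfrac{x-\tilde\beta(x)}{\tilde\beta'(x)}-x$ on the right), matching \eqref{ineq:symVI:FP}; your versions with $-\beta(x)$ and $-\tilde\beta(x)$ are incorrect and would lead you to search for the wrong counterexample. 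Relatedly, since the monotonicity test only involves $DU(\vec\beta)$ and $DU(\vec{\tilde\beta})$ at symmetric profiles, the indicator is $\chi_{\{\beta(x)<\beta(1)\}}\equiv 1$ a.e.\ automatically, so the concern about matching the endpoints is unnecessary.
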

\begin{proof}
For $F(x) = x$ and $n = 2$, and using \eqref{eq:derivative:SP}, the quasi-monotonicity condition \eqref{ineq:pseudomonotone} reads
\begin{multline}\label{ineq:qmono:FP}
\int_0^1 \big(\tilde{\beta}(x) - \beta(x)\big) \Big( \tfrac{x - \beta(x)}{\beta'(x)} - x \Big) \diff x < 0
\Rightarrow \quad
\int_0^1 \big(\tilde{\beta}(x) - \beta(x)\big) \Big( \tfrac{x - \tilde{\beta}(x)}{\tilde{\beta}'(x)} - x \Big) \diff x \leq 0 .
\end{multline}
A counterexample is given by the piece-wise linear and continuous functions
\begin{align*}
\beta(x) &= \frac{61 x}{100} ,&
\tilde{\beta}(x) = \begin{cases} x & \text{for } x \leq \frac{1}{3} ,\\
	\frac{x}{10} + \frac{3}{10} & \text{for } \frac{1}{3} < x \leq \frac{2}{3} ,\\
	\frac{63 x}{100} - \frac{4}{75} & \text{for } \frac{2}{3} < x,
\end{cases}
\end{align*}
which yield a negative value for the left-hand side of \eqref{ineq:qmono:FP}, but a positive value for the right-hand side of \eqref{ineq:qmono:FP}. Therefore, $DU$ is not quasi-monotone and consequently neither pseudo-monotone nor monotone. Note that $\tilde{\beta}' \geq \frac{1}{10} \geq \delta$.
\end{proof}

\section{Numerical Analysis}
In order to better understand the violations of the Minty condition in first-price auctions, we analyze symmetric learning in the space of piecewise linear functions. More specifically, we analyze bid functions consisting of two linear functions in a first- and second-price sealed-bid auction with two bidders having a uniform prior distribution. These bid functions are parametrized by the slopes  $b_1, b_2$ of the two pieces, i.e., 
\begin{equation}
\beta(x) = \begin{cases}
	b_1 x &\text{if } x \leq \tfrac{1}{2} \\
	\tfrac{b_1}{2} + b_2 (x-\tfrac 1 2) &\text{if } \tfrac{1}{2} < x.
\end{cases}    
\end{equation}
This low-parameter environment allows us to examine the resulting vector field and those areas where the Minty condition is violated (see Figure \ref{Minty}).
\begin{figure}
\centering
\includegraphics[width=0.8\textwidth]{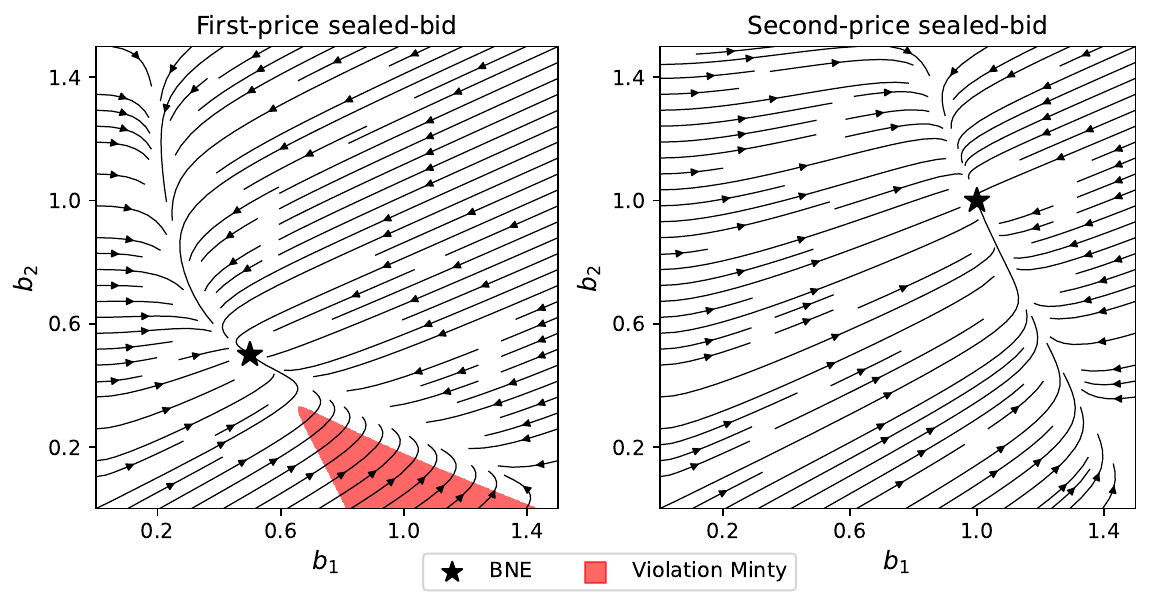} \caption{Gradients for piecewise linear bid functions with two pieces.}
\label{Minty}
\end{figure}
The unique Bayes--Nash equilibrium (marked with a star) is to have a slope of $\tfrac 1 2$ for the first-price auction and a slope of $1$ for the second-price auction for both linear functions. The areas colored red mark where the Minty condition is violated. The figure shows that violations of the Minty condition, a sufficient condition for convergence, are without loss in this analysis. The laminar flows take a turn, but no matter with which parameter combination one starts, the gradient flow always leads to the Bayes--Nash equilibrium. Traversing the red area means that the strategies analyzed on the way move away from the Bayes--Nash equilibrium, but they move closer again once the algorithm steps outside this area.\footnote{Let us note that the utility of the agents increases with lower slopes. Utility does not increase along the trajectories of the gradient such that the utility function is no Lyapunov function of the dynamical system even in the symmetric case. } 

\section{Conclusions}

Learning in games has received much recent attention in the literature. It is well known, that learning algorithms do not always converge to an equilibrium in games, but they do converge in some types such as potential games. Recent advances in equilibrium learning showed that learning algorithms converge in a wide variety of auction games. The reasons for these observations are not well understood. We draw on the connection between auction games and infinite-dimensional variational inequalities, which has not been explored so far. In particular, there are sufficient conditions for which it has been shown that independent optimization algorithms find a solution to the variational inequality.  

Monotonicity can be seen as a generalization of convexity in optimization, and it provides a condition for first-order optimization methods to converge to a solution of the variational inequality. Our analysis shows that neither the second- nor the first-price auctions are monotonous. There are even counterexamples for the weaker pseudo- and quasi-monotonicity conditions. More recent literature on non-monotone variational inequality uses the Minty condition to show the convergence of extragradient algorithms (cf. Appendix \ref{sec:appendix}). In the first-price auction, this condition is also not satisfied, even when assuming a uniform prior. 

However, both auctions have a unique solution {to the variational inequality} (in case of first-price auctions up to the boundary), which means, that every gradient-based learning algorithm must attain the {Bayes--Nash equilibrium (BNE)} if it does converge. Regarding this problem as a dynamical system provides a potential remedy. Since the BNE is the only stationary point (in the interior), the gradient dynamics either converge to the BNE, converge to a limit cycle, or diverge. In general, finding limit cycles of dynamical systems is a PSPACE-complete problem \citep{papadimitriou2015computational}. Our numerical explorations illustrate that the gradient dynamics do converge in the first- and second-price auctions. Whether limit cycles or divergence are possible in other auction games, remains an open research question to be studied in the future. 

\section*{Acknowledgments}
This project was funded by the Deutsche Forschungsgemeinschaft (DFG, German Research Foundation) under Grant No.  BI 1057/9; WO 671/11; and Project Number 277991500.


\bibliographystyle{informs2014} 



\newpage
\begin{appendix}
	
	\section{Minty condition and extragradient algorithms}\label{sec:appendix}
	
	In the following, we consider the Minty condition for variational inequalities, which was recently shown to be sufficient for a number of algorithms \citep{strodiot2016class, Song2020OptDualExtrapolation}. This condition is also referred to as the {Minty VI or dual VI} of the Stampacchia-type VI \citep{ye2022infeasible}. In particular, \cite{Song2020OptDualExtrapolation} show that when a Minty-type solution exists, optimistic dual extrapolation converges to a Stampacchia-type solution of the VI. Here, we discuss this approach for the symmetric second-price sealed-bid auction, which has a MVI solution as shown in Lemma \ref{la:SP:BNE}. In particular, this implies that at least the rather expensive optimistic dual extrapolation provably finds the BNE of symmetric second-price auctions with independent private values.
	
	To apply the results of \cite{Song2020OptDualExtrapolation}, we first need to reformulate the problem in a Hilbert space setting. To this end, we use the Hilbert space $\mathcal{H} = H^1((0,1); F)$ with inner product
	\[
	(v,w)_{\mathcal{H}} := \int_0^1 v(x)w(x) + v'(x)w'(x) \diff F(x)
	\]
	and induced norm $\|v\|_{\mathcal{H}} := \sqrt{(v,v)_{\mathcal{H}}}$. This norm satisfies Assumption 2 of \cite{Song2020OptDualExtrapolation}.
	We consider the closed and convex set $\mathcal{W}_\delta := \{ h \in \mathcal{H} \ : \ 0 < \delta \leq h' \ \text{$F$-a.e., and} \ 0 \leq h \leq 1 \ \text{$F$-a.e.} \}$ instead of $\mathcal{B}_\delta$ before, since we replaced the Banach space $V$ by the Hilbert space $\mathcal{H}$.
	The operator $DU : \mathcal{W}_\delta \subset \mathcal{H} \to \mathcal{H}^*$ is Lipschitz-continuous
	with Lipschitz-constant $L_{DU} \leq 2 \delta^{-2} \|g\|_{L^\infty}$ since
	\begin{align*}
		\big| \big(DU(\vec{\beta}) - DU(\vec{\tilde{\beta}})\big)[d] \big|
		&= \left|\int_0^1 d(x) \Big[ \frac{x - \beta(x)}{\beta'(x)} - \frac{x - \tilde{\beta}(x)}{\tilde{\beta}'(x)} \Big] g(x) \diff F(x)\right| \\
		&\leq \|d\|_{L^2}\|g\|_{L^\infty} \left\| \frac{\big(x - \beta(x)\big)\tilde{\beta}'(x) - \big(x - \tilde{\beta}(x)\big)\beta'(x)}{\beta'(x)\tilde{\beta}'(x)} \right\|_{L^2} \\
		&\leq \|d\|_{L^\infty}\|g\|_{L^\infty} \left\| \frac{\tilde{\beta}(x) - \beta(x)}{\beta'(x)}\right\|_{L^2} \\&\quad
		+ \|d\|_{L^\infty}\|g\|_{L^\infty} \left\|\frac{\big(x - \tilde{\beta}(x)\big)\big(\tilde{\beta}'(x)-\beta'(x)\big)}{\beta'(x)\tilde{\beta}'(x)} \right\|_{L^2} \\
		&\leq \|d\|_{L^\infty}\|g\|_{L^\infty} \left( \delta^{-1} \| \beta - \tilde{\beta} \|_{L^2} + \delta^{-2} \| \beta'-\tilde{\beta}'\|_{L^2} \right) \\
		&\leq 2 \delta^{-2} \|d\|_{L^\infty} \|g\|_{L^\infty} \| \beta - \tilde{\beta} \|_{\mathcal{H}} 
	\end{align*}
	for ${0 <} \delta \leq 1$. Hence, Assumption 1 of \cite{Song2020OptDualExtrapolation} is satisfied. Finally, Assumption 2  of \cite{Song2020OptDualExtrapolation} requires the existence of {a} solution to the MVI \eqref{ineq:symMVI}, which was shown in Lemma \ref{la:SP:BNE} (Note that $\mathcal{H} = H^1(0,1; F) \subset V = W^{1,1}(0,1; F)$ and $\beta^* = \textup{Id} \in \mathcal{H}$).
	Following the argumentation \cite{Song2020OptDualExtrapolation}, now applied to the infinite-dimensional $\mathcal{H}$ instead of $\mathbb{R}^n$, we obtain the convergence of the optimistic dual extrapolation
	\cite[Algorithm 1]{Song2020OptDualExtrapolation}, given here in Algorithm~\ref{alg:OptDEA}. In particular, we obtain:
	\begin{proposition}
		After $K$ iterations, Algorithm~\ref{alg:OptDEA} returns a $\tilde{\beta}_K$ such that
		\begin{align*}
			\sup_{\begin{smallmatrix} \beta \in \mathcal{W}_\delta \\ \|\tilde{\beta}_K - \beta\|_{\mathcal{H}} \leq D \end{smallmatrix}} \big|DU(\vec{\tilde{\beta}}_K)[\tilde{\beta}_K-\beta]\big| \leq \sqrt{8} \left(1 + \alpha^{-1}\right) D L_{DU} \| \beta_0 - \beta^* \|_{\mathcal{H}} K^{-1/2} .
		\end{align*}
	\end{proposition}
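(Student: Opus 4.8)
The plan is to derive the estimate as an infinite-dimensional instance of the convergence guarantee for optimistic dual extrapolation proved in \cite{Song2020OptDualExtrapolation}, once the hypotheses of that analysis have been verified for the present data. Three ingredients are required. First, a prox setup with a strongly convex reference function: here $\tfrac12\|\cdot\|_{\mathcal{H}}^2$ is $1$-strongly convex, which is their Assumption~2 and was noted above. Second, Lipschitz continuity of the vector field on the feasible set: the computation above shows that $\beta \mapsto DU(\vec{\beta})$ maps $\mathcal{W}_\delta \subset \mathcal{H}$ into $\mathcal{H}^*$ and is $L_{DU}$-Lipschitz with $L_{DU} \le 2\delta^{-2}\|g\|_{L^\infty}$, which is their Assumption~1. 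Third, existence of a Minty-type solution: by Lemma~\ref{la:SP:BNE} the symmetric MVI \eqref{ineq:symMVI} is solved by $\beta^* = \textup{Id}$, and in particular, taking $\tilde{\beta} = \beta$ there, the diagonal inequality $DU(\vec{\beta})[\beta - \beta^*] \le 0$ holds for all $\beta \in \mathcal{W}_\delta$, with $\textup{Id} \in \mathcal{H} \subset V$.

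The second step is to observe that the proof in \cite{Song2020OptDualExtrapolation} does not use finite dimensionality. Algorithm~\ref{alg:OptDEA} only evaluates $DU$ and the metric projection onto the closed, convex, nonempty set $\mathcal{W}_\delta$ (which contains $\textup{Id}$, and whose defining constraints are stable under $H^1$-convergence since point evaluation and weak differentiation are continuous on $\mathcal{H}$); such projections exist and are single-valued in any Hilbert space. Their convergence proof is a telescoping/Lyapunov argument built from the law-of-cosines identity for the $\mathcal{H}$-inner product, the Cauchy--Schwarz inequality, the Lipschitz bound, and the Minty inequality, all of which hold verbatim in $\mathcal{H}$. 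Carrying this argument out with $\mathbb{R}^n$ replaced by $\mathcal{H}$ produces, after $K$ iterations, exactly the bound $\sqrt{8}(1+\alpha^{-1}) D\, L_{DU}\, \|\beta_0 - \beta^*\|_{\mathcal{H}}\, K^{-1/2}$ on the restricted gap functional $\sup\{\, |DU(\vec{\tilde{\beta}}_K)[\tilde{\beta}_K - \beta]| : \beta \in \mathcal{W}_\delta,\ \|\tilde{\beta}_K - \beta\|_{\mathcal{H}} \le D \,\}$, where $\beta_0$ is the initialization and $\alpha$ the step-size parameter; this is the claimed inequality.

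The main obstacle I anticipate is the bookkeeping needed to certify that no step of \cite{Song2020OptDualExtrapolation} silently invokes compactness of norm-bounded sets --- e.g. passing to a convergent subsequence, or interchanging a supremum with a limit --- since bounded sets in $\mathcal{H}$ are not norm-compact and, in fact, $\mathcal{W}_\delta$ is itself unbounded in the $H^1$-norm (the constraint $\beta' \ge \delta$ together with $\beta \le 1$ bounds $\|\beta'\|_{L^1}$ but not $\|\beta'\|_{L^2}$). This is precisely why the convergence criterion is the \emph{restricted} gap functional with radius $D$: it is a deterministic, per-iterate, localized quantity that requires neither compactness nor boundedness of the feasible set, and a careful pass through their estimates confirms they are all of this algebraic, finite-scope type, so the transcription goes through.
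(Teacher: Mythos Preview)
Your proposal is correct and follows essentially the same approach as the paper: verify the Hilbert-space prox setup (Assumption~2), the Lipschitz bound on $DU$ (Assumption~1), and the existence of a Minty solution via Lemma~\ref{la:SP:BNE}, then invoke the argument of \cite{Song2020OptDualExtrapolation} transcribed verbatim from $\mathbb{R}^n$ to $\mathcal{H}$. The paper's own justification is in fact terser than yours---it simply states that the assumptions are met and that ``following the argumentation \dots\ now applied to the infinite-dimensional $\mathcal{H}$ instead of $\mathbb{R}^n$'' the bound follows---so your added discussion of why no compactness is needed and why the restricted gap functional sidesteps the unboundedness of $\mathcal{W}_\delta$ in $H^1$ is a useful elaboration rather than a deviation.
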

	The parameter $\alpha$ can be freely chosen in the interval $(0, \frac{1}{4\sqrt{2}}]$. The best error constant is achieved for the maximal $\alpha = \frac{1}{4\sqrt{2}}$.
	Note that in Algorithm~\ref{alg:OptDEA} the proximal mapping $P_v : \mathcal{H}^* \to \mathcal{H}$ for $v \in \mathcal{H}$ is given by
	\[
	P_v(E) := \argmin_{z \in \mathcal{W}_\delta} \Big( E[z] + \frac{1}{2} \|z - v\|_{\mathcal{H}}^2 \Big) .
	\]
	This requires solving a \emph{monotone} VI on $\mathcal{W}_\delta$ for each evaluation.
	\newpage
	\begin{algorithm}
		\caption{Optimistic dual extrapolation algorithm (\cite{Song2020OptDualExtrapolation})}
		\label{alg:OptDEA}
		\algrenewcommand\algorithmicrequire{\textbf{Input:}}
		\begin{algorithmic}[1]
			\Require{Lipschitz constant $L_{DU} > 0$ and parameter $0 < \alpha \leq \frac{1}{4\sqrt{2}}$.}
			\State $\beta_0 = z_0 \in W_\delta$, $g_0 = 0 \in \mathcal{H}^{-1}$
			\For{ $k = 1, 2, 3, \dots, K$ }
			\State $\beta_k = P_{z_{k-1}}\big( -\frac{\alpha}{L} DU(\vec{\beta_{k-1}}) \big)$
			\State $g_k = g_{k-1} - \frac{\alpha}{L} DU(\vec{\beta_k})$
			\State $z_k = P_{\beta_0}\big( g_k \big)$
			\EndFor
			\State $\tilde{\beta}_K = \argmin_{\beta_k \, : \, 1 \leq k \leq K} \big( \| \beta_k - z_{k-1} \| + \| \beta_{k-1} - z_{k-1} \| \big)$
		\end{algorithmic}
	\end{algorithm}
	\vfill
	
\end{appendix}

\end{document}